\newtheorem{theorem}{Theorem}[section]
\declaretheoremstyle[
  bodyfont=\normalfont\itshape,
  headformat=\NAME\NUMBER\NOTE  
]{nospacetheorem}
\declaretheorem[style=nospacetheorem,name=A]{hypothesis}
\newtheorem{proposition}[theorem]{Proposition}
\newtheorem{lemma}[theorem]{Lemma}
\def\b0{\boldsymbol{0}}
\def\by{\boldsymbol{y}}
\def\bs{\boldsymbol{s}}
\def\ba{\boldsymbol{a}}
\def\bz{\boldsymbol{z}}
\def\bg{\boldsymbol{g}}
\def\bt{\boldsymbol{t}}
\def\bd{\boldsymbol{d}}
\def\bD{\boldsymbol{D}}
\def\bSig{\boldsymbol{\Sigma}}
\def\bI{\boldsymbol{I}}
\def\bx{\boldsymbol{x}}
\def\beps{\boldsymbol{\epsilon}}
\begin{document}
%
\title{Robust control of varying weak hyperspectral target detection with sparse non-negative representation}

\author{Raphael~Bacher,
        Celine~Meillier,
        Florent~Chatelain
        and~Olivier~Michel
\thanks{The authors are with the Image and Signal Department, GIPSA-Lab,
Grenoble Institute of Technology, Saint Martin d'Heres 38400, France (e-mail:
raphael.bacher@gipsa-lab.grenoble-inp.fr).}}

\maketitle

\begin{abstract}
In this study, a multiple-comparison approach is developed for detecting faint hyperspectral sources.
The detection method relies on a sparse and non-negative representation on a highly coherent dictionary to 
track a spatially varying source. A robust control of the detection errors is ensured by learning the 
test statistic distributions on the data. The resulting control is based on the false discovery rate, 
to take into account the large number of  pixels to be tested. This method is applied to data recently recorded by the 
three-dimensional spectrograph Multi-Unit Spectrograph Explorer. 

\end{abstract}

\IEEEpeerreviewmaketitle

\section{Introduction}

\IEEEPARstart{W}{ith} the constant development of new imaging devices, the exploration of massive multi-modalities datasets has become an important field of study, with many challenges to face. In particular, the present study is motivated by the need to detect faint emission line features in massive hyperspectral data produced by the
recent three-dimensional (3D) spectrograph Multi-Unit Spectrograph Explorer (MUSE) instrument \cite{bacon2010muse}. The targeted emission lines are markers of spatially extended structures (or 'halos', surrounding galaxies) that can exhibit spectral variability (spectral shifts). 
Furthermore, the presence of a large number of nuisance sources with high dynamics in the background makes the estimation of the background statistics particularly challenging. Searching faint signals in massive datasets requires the removal of possibly strong contributions of unwanted objects. Typically, for tracking faint signatures in hyperspectral data with high background dynamics, the spectra baseline must be estimated and removed.
Finally, the size of the data to be explored calls for error controls of mis-detections with a global significance, such as the false discovery rate (FDR)\cite{benjamini1995controlling}, to allow unsupervised detection of the targets.

To summarize, the present study addresses a quite general detection problem whose main features are:
\begin{itemize}
\item highly variable, partially known, weak target signatures;
\item background difficult to model;
\item possible presence of strong contributions from unwanted sources that have to be removed;
\item necessity for robust global mis-detections control.
\end{itemize}

Many methods have been developed in recent years to detect targets in hyperspectral data \cite{manolakis2009there}, all requiring knowledge of the background and/or the target signature.
Among these studies, spectral anomaly detectors can be used when the searched signal is unknown, but rely on a parametric statistical modeling of the background signature \cite{reed1990adaptive} 
Spectral matching detectors, such as adaptive matched filters \cite{manolakis2000comparative}
or adaptive cosine estimators \cite{scharf1996adaptive}, exploit prior knowledge of both the target signature and the background characterization. 
A common feature of these approaches is the lack of global control of false alarm rate.
Furthermore, these detectors were developed in a remote-sensing framework \cite{solomon1985imaging} at rather high signal-to-noise ratio (SNR) and can barely be adapted to the new challenges (low SNR, absence of ground-truth...) raised by new massive hyperspectral data such as those provided by new instruments in astronomy. 
An alternate class of methods relies on sparse representation \cite{chen2011sparse} based on a trained dictionary. These are in fact mostly reconstruction methods, exploited for detection purposes. Up to our knowledge these methods do not allow to calibrate the type I error (false alarm) control of the test.
Another pitfall is that in general no training set are available in the astronomical context.
Finally, recent approaches  \cite{paris2013detection}, \cite{courbot2016detection} try to tackle a problem having the same features as ours. However, the Generalized Likelihood Ratio based solutions proposed in these studies do not allow reliable control of detection error. This control is crucial for massive datasets in general, and for the present MUSE dataflow in particular, and is the core of this study.

Rare events or sparse signals detection problems have received much attention in the recent statistical literature. Multiple-comparison procedures, such as higher criticism  or Bonferroni-type methods, have been proposed and shown to have asymptotic optimal detection properties under sparsity regimes \cite{donoho2004higher,hall2010innovated,arias2011global,donoho2015higher}.
Such methods do not require specification of the signals/events to be detected. Higher criticism procedures can be viewed as adaptive to the unknown sparsity level and power of the signals to be detected. 
These multiple-comparison procedures can be applied on the dictionary-based representations of the signals. Overcomplete and/or coherent dictionaries are prone to provide sparse representation matched to the application at hand, and were shown to greatly improve the detection power\cite{donoho2004higher}. Again no global error control is performed by these methods.

The purpose of the present paper is thus threefold:
\begin{itemize}
\item derive a detection algorithm that benefits from the detection power of the sparse representation based approaches;
\item propose a method that requires very weak assumptions on the background or target statistical properties;
\item operate a test procedure that allow a control of the global false alarm rate (FDR).
\end{itemize}

The proposed method is built upon a spectral matching approach over a highly coherent dictionary of target spectra, to take the variability into account. 
It elaborates on the max-test study presented in \cite{arias2011global} for the detection of rare events. As such, it is built on a sparse representation whose purpose is to allow the formulation of a detection test, that do not need any signal reconstruction. To insure a calibration of the test procedure (FDR wise) that is robust to background misspecification, a new simple procedure is proposed, that mainly requires symmetry of the noise distribution.
Note that exploiting this symmetry of the noise versus the positivity of sources in astronomical context was also developed in \cite{serra2012using} but without providing a global control of the errors nor the formalization of a varying target matched over a dictionary of spectral shapes.

The paper is organized as follows.
Section \ref{sec:method} defines the core of the proposed detection approach. The application-oriented design of the dictionary, the data preprocessing step, and the results on the real MUSE data are described 
in section \ref{sec:appli}.
Some conclusions and perspectives are drawn up in section \ref{sec:conclu}.

\subsection*{Notations}
In the following, a 'pixel' refers indifferently to a position in the MUSE spatial grid and to the associated spectrum.  
A pixel or its associated spectrum vector is represented by bold letters e.g. $\bx$, and bold capital letters refer to matrices. 

\section{Detection method}
\label{sec:method}

\subsection{Testing problem}

We address now the detection of a 
signal $\bx$, from noisy data $\by \in \mathbb{R}^l$.
Let $\mathcal{H}_{0}$ and $\mathcal{H}_{1}$ be the hypotheses denoting, respectively, the absence or presence of the 
source contribution $\bx$. The testing problem is:
\begin{align}
\left\{
\begin{array}{ll}
   \mathcal{H}_{0} : \by=\beps,\\ 
    \mathcal{H}_{1} : \by=\bx+\beps,
\end{array}
\right.
\label{eq:testingPb0}
\end{align}
where $\beps \in \mathbb{R}^l$ is a noise vector, centered and independent of $\bx$, for which the distribution is not known. 

When $\bx$ is not fully specified, a classical approach for \eqref{eq:testingPb0} consists of modeling $\bx$ as a sparse superposition
of reference signals taken from a massively overcomplete dictionary $\bD$, 
see for instance \cite{mallat1993matching}, or 
\cite{huang2006sparse} for classification tasks.
The reference signals, or {\em atoms}, correspond to the column  
vector $\bd_j \in \mathbb{R}^l$, for $1 \le j \le m$, of $\bD \in \mathbb{R}^{l\times m}$,
where $m$ is the total number of references. 
These atoms are usually scaled to be $\ell_2$-normalized: $|| \bd_j ||_2=1$  for $1 \le j \le m$.

Moreover, in the present context, the signal of interest 
$\bx$ is generally assumed to be non-negative. Thus, to enforce a non-negative decomposition,
the atoms $\bd_j$, for $1 \le j \le m$ are assumed to be non-negative. It should be noted that, unlike most of the sparse representation techniques in the literature, we do not seek to build an optimal dictionary for reconstruction/estimation but for the design of a good detection test. The dictionary construction, based on physical priors and specific to the application at hand, will be addressed in section \ref{sec:dico} in the framework of MUSE application\footnote{Note that in this application the non-negativity constraints can be relaxed. 
The target $\bx$ and the atoms $\bd_j$ can have negative or positive contributions, as long as their 
dot product $\bx^T\bd_j$ are non-negative for $1 \le j \le m$.}.
Under the non-negativity and sparsity assumptions, the target signal can be expressed as
\begin{align}
 \begin{split}
 & \bx  \approx  a_{i_1} \bd_{i_1}+\ldots + a_{i_k} \bd_{i_k},\\
 & \textrm{s.t.  } a_{i_j} > 0, \quad \textrm{for } 1\le j \le k,
 \end{split}
 \label{eq:sparseDecomp}
\end{align}
where $k \ll m$.
Based on this representation, the detection problem reduces to a multiple comparison 
procedure with one-sided tests:
\begin{align}
 \begin{split}
\left\{
\begin{array}{ll}
   \mathcal{H}_{0} :  a_1=a_2= \ldots=a_m=0,\\ 
    \mathcal{H}_{1} : \textrm{at least one } a_i>0,
\end{array}
\right.
 \end{split}
 \label{eq:testingPb}
\end{align}

\subsection{Test statistic}
\label{se:testingPb}
We now search a test statistic adapted to the detection problem \eqref{eq:testingPb} obtained by sparse representation. Let us first consider the statistic for a single atom.
Let $S(\by,\bd)$ be a measure of similarity between the observed data $\by \in \mathbb{R}^l$ and a normalized reference vector $\bd \in \mathbb{R}^l$. 
Popular examples of similarity measures include the matched filter statistic 
\begin{align}
S(\by,\bd) \equiv \left\langle \frac{\bd}{||\bd||}, \by \right \rangle= \bd^T\by,
\label{eq:MF}
\end{align}
or the spectral angular distance (SAD)  
\begin{align}
S(\by,\bd) \equiv \frac{\langle \bd, \by\rangle}{||\bd|| ||\by||}=\frac{\bd^T\by}{||\by||},
\label{eq:SAD}
\end{align}
which is a  classical distance in hyperspectral analysis \cite{schowengerdt2006remote}. For a given signal amplitude $a=||\by||>0$, such similarity measures are maximized when $\by= a \bd$.

Based now on the pairwise similarity measures $S(\by,\bd_j)$ between the observation $\by$ and the dictionary atoms $\bd_j$, for $1\le j \le m$,
a global test for the multiple (with regards to the $m$ atoms) testing problem introduced in \eqref{eq:testingPb} can be derived from a Bonferroni-like correction. Accounting for \eqref{eq:sparseDecomp}, this leads us to 
consider the following one-sided max-test approach:
\begin{align}
T_{\textrm{max}}(\by) \equiv \max\limits_{1\le j \le m} S(\by, \bd_j)  \underset{H_0}{\overset{H_1}{\gtrless}} \eta,
\label{eq:maxtest}
\end{align}
where $\eta$ is a given threshold. The motivations for using this max-test approach are two-fold. First, from a theoretical point of view, with highly sparse signals, the max-test method is asymptotically as efficient as the asymptotically optimal higher criticism method, as demonstrated in \cite{donoho2004higher,arias2011global}. Secondly, in finite sample settings such as the MUSE hyperspectral datasets, max-test has been shown to be relatively efficient \cite{paris2013detection,meillier2015nonparametric}, and empirically more powerful than higher criticism methods \cite{meillier2015detection}.

We now tackle the problem of applying the max-test defined in \eqref{eq:maxtest} to a large number $n$ of data realisations $\{\by_i\}_{1\leq i \leq  n}$. We are again in a multiple-testing context, now with regards to the number of samples $n$. This is in regards to this context that we seek to control the detection errors.
To fix the decision threshold $\eta$ while controlling the type I errors, i.e., the samples under $\mathcal{H}_0$ that will be falsely detected as $\mathcal{H}_1$, 
 the distribution under the null hypothesis $\mathcal{H}_0$ of the  max-test statistic $T_{\textrm{max}}(\by)$  must be known. 
In real applications such as the MUSE data, due to the physical process and preprocessing steps (e.g., interpolation, background subtraction), noise is spatially and spectrally correlated with an unknown complex dependence structure. Thus the distribution of $T_{\textrm{max}}(\beps)$, where $\beps$ is the noise vector introduced in  \eqref{eq:testingPb0},
 cannot be easily modeled as a standard parametric distribution. However in a large scale testing framework, it becomes possible to estimate this distribution, as explained in the next section.

\subsection{Learning the null distribution}
\label{subsec:H0}

Consider now the following assumptions:
\begin{hypothesis}
The noise vector $\beps$ is centered and symmetrically distributed: $\beps$ and $-\beps$ have the same distribution,
\end{hypothesis}
\begin{hypothesis}
The similarity measure $S(\by,\bd)$ used to construct the max-test statistics is an odd function of the observations $\by$, i.e. $S(\by,\bd)=-S(-\by,\bd)$ for any 
sample $\by$ and for any reference vector $\bd$.
\end{hypothesis}
Assumption A1 on the noise is relatively weak and fairly reasonable. This is, for instance, satisfied for any centered elliptical distribution, such as multivariate Gaussian or student distributions. 
Note also that assumption A2 is clearly satisfied for the matched filter  or the SAD  statistics described respectively in \eqref{eq:MF} and \eqref{eq:SAD}. 
A direct consequence of these assumptions is the following key property:
\begin{proposition}
 Based on assumptions A1 and A2,  the max-test statistic $T_{\textrm{max}}(\by)$ and the opposite of the min statistic  $-T_{\textrm{min}}(\by)$, where
 $T_{\textrm{min}}(\by) \equiv \min\limits_j S(\by, \bd_j)$, are identically distributed under the null hypothesis $\mathcal{H}_0$.  
 \label{prop:minstat}
\end{proposition}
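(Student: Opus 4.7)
The plan is to work under $\mathcal{H}_0$ (so $\by = \beps$) and combine the two assumptions in a single chain of identities/distributional equalities, with essentially no hard step.

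First, I would use A2 pointwise on each atom to turn the max into a min. Since for every $\bd_j$ we have $S(\beps, \bd_j) = -S(-\beps, \bd_j)$, and since $\max_j (-u_j) = -\min_j u_j$ for any real tuple $(u_1, \dots, u_m)$, we obtain the deterministic identity
\begin{equation*}
T_{\textrm{max}}(\beps) = \max_{1 \le j \le m} S(\beps, \bd_j) = -\min_{1 \le j \le m} S(-\beps, \bd_j) = -T_{\textrm{min}}(-\beps).
\end{equation*}
This step is purely algebraic and does not invoke any probabilistic assumption.

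Next, I would invoke A1: since $\beps$ and $-\beps$ have the same distribution, and the function $\bz \mapsto -T_{\textrm{min}}(\bz) = -\min_j S(\bz, \bd_j)$ is a fixed (measurable) map on $\mathbb{R}^l$, applying this map to $\beps$ and to $-\beps$ produces two random variables with the same law. Hence
\begin{equation*}
-T_{\textrm{min}}(-\beps) \;\stackrel{d}{=}\; -T_{\textrm{min}}(\beps).
\end{equation*}
Chaining this with the previous identity yields $T_{\textrm{max}}(\beps) \stackrel{d}{=} -T_{\textrm{min}}(\beps)$ under $\mathcal{H}_0$, which is the claim.

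There is really no obstacle here; the only thing to be mindful of is to keep the two arguments separate: the oddness of $S$ in A2 is a \emph{deterministic} property giving an almost-sure identity, while the symmetry in A1 is a \emph{distributional} property used only to drop the sign on $\beps$. The measurability of $\bz \mapsto \min_j S(\bz, \bd_j)$ is automatic because $S$ is continuous (linear in $\by$ in the two examples given) and the minimum of finitely many measurable functions is measurable, so no regularity worry arises.
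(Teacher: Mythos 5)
Your proof is correct and is essentially the paper's own argument, just run in the opposite order: the paper first applies A1 to pass from $T_{\textrm{max}}(\beps)$ to $T_{\textrm{max}}(-\beps)$ and then uses the A2/max--min identity to rewrite the latter as $-T_{\textrm{min}}(\beps)$, whereas you first use A2 to get the deterministic identity $T_{\textrm{max}}(\beps)=-T_{\textrm{min}}(-\beps)$ and then invoke A1. The two chains are the same computation, so nothing further is needed.
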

\begin{proof}
Under the null hypothesis $\by=\beps$. According to A1, $T_{\textrm{max}}(\beps)$ and  $T_{\textrm{max}}(-\beps)$ are identically distributed.
Moreover, 
\begin{align*}
 T_{\textrm{max}}(-\beps)&= \max\limits_j S(-\beps, \bd_j)=  -\min\limits_j \left\{ -S(-\beps, \bd_j) \right\},\\
 &=-\min \limits_j S(\beps, \bd_j)= -T_{\textrm{min}}(\beps),
\end{align*}
where the first equality on the second line is due to A2. Thus $T_{\textrm{max}}(\beps)$ and -$T_{\textrm{min}}(\beps)$ have the same distribution.
\end{proof}

In a large-scale testing framework (with regards to the number of samples $n$), the max and min statistics $T_{\textrm{max}}(\by)$ and $T_{\textrm{min}}(\by)$ are computed for a large number of observations $\by_i$, for $1\le i \le n$. Let $\pi_0 \in (0,1]$ be the true proportion of observations $\by_i$ distributed according to the null hypothesis, while $\pi_1=1-\pi_0$ is the proportion of observations distributed according to the alternative hypothesis for the testing problem \eqref{eq:testingPb0}. Let $F(t)=\Pr{\left(T_{\textrm{max}}(\by) \le t\right)}$ be the cumulative distribution function of the max statistic $T_{\textrm{max}}(\by) $. This distribution function can be expressed as a two-groups model:
\begin{align*}
 F(t)= \pi_0 F_0 (t)+\pi_1 F_1 (t),
\end{align*}
where $F_0$ and $F_1$ denote the distribution functions under the null and the alternative hypotheses, respectively. Under the non-negativity assumption introduced in
section \ref{se:testingPb}, 
$T_{\textrm{max}}(\by)$  should be stochastically larger under $\mathcal{H}_{1}$ than under $\mathcal{H}_{0}$, i.e., $F_0(t) > F_1(t)$ for any $t\in \mathbb{R}$. 
Let $\mu_0$ be the median of the max test statistics under the null hypothesis\footnote{For the sake of simplicity,  the observations are assumed to obey absolutely continuous distributions. Thus the test statistics $T$ are also continuous, and their median $\mu$ is defined as $\Pr( T \le \mu )=\Pr( T \ge \mu )=\frac{1}{2}$. The extension to discrete statistics is left to the reader.} (referred to as the {\em null median} hereafter), i.e., $F_0 (\mu_0)= \frac{1}{2}$. We now introduce the classical {\em zero assumption}, as termed by Efron a different context \cite[Chap. 6]{efron2012large}. This assumes the existence of a noise-only domain that allows to build a procedure for estimating the null distribution (see {\em Remark 2} hereinafter for a discussion about this assumption).
\begin{hypothesis}[Zero assumption for $F_1$]
 $F_1(t) = 0$ for the region $t \le \mu_0$ where the max statistics are most likely under $\mathcal{H}_{0}$.
\end{hypothesis}
From this assumption, we can now derive the following expression:
\begin{align*}
 F(t) = \pi_0 F_0 (t), \quad \textrm{for } t\le \mu_0.
\end{align*}
In a similar manner,  the survival function $G(t)=\Pr{\left(-T_{\textrm{min}}(\by) >t\right)}$ of the opposite min statistic $-T_{\textrm{min}}(\by)$ reads as
\begin{align*}
 G(t)= \pi_0 G_0 (t)+\pi_1 G_1 (t),
\end{align*}
where $G_0$ and $G_1$ are the survival functions of $-T_{\textrm{min}}(\by)$ under the null and  alternative hypotheses, respectively. 
This comes from the non-negativity assumption 
that $-T_{\textrm{min}}(\by)$ should be stochastically smaller under $\mathcal{H}_{1}$ than under $\mathcal{H}_{0}$, i.e., 
$G_0(t) > G_1(t)$. Note that $\mu_0$ is also the median of the null distribution of $-T_{\textrm{min}}(\by)$ as 
$G_0 (\mu_0)= 1-F_0(\mu_0)= \frac{1}{2}$ according to the proposition \ref{prop:minstat}. This allows us to introduce the following
zero assumption.
\begin{hypothesis}[Zero assumption for $G_1$]
 $G_1(t) = 0$ for the region $t \ge \mu_0$ where the opposite min statistics are most likely under $\mathcal{H}_{0}$.
\end{hypothesis}
\noindent Thus
\begin{align*}
 G(t) = \pi_0 G_0 (t),  \quad \textrm{for } t \ge \mu_0.
\end{align*}
Since $G_0 (t)=1-F_0(t)$  according to the proposition \ref{prop:minstat}, we can finally derive the following expression:
\begin{align}
 \pi_0 F_0 (t) & = \begin{cases}
                    F(t), & \textrm{for } t \le \mu_0,\\
                    \pi_0- G(t), & \textrm{for } t > \mu_0.\\
                  \end{cases}
                  \label{eq:expH0}
\end{align}

The main interest of this expression is that it does not depend on each group distribution function but only on the distribution function of the two-groups model. In particular, assumptions A3 and A4 do not require to fully specify $F_1$, which is unlikely to be known in practice.  Expression 
\eqref{eq:expH0} is therefore robust to alternative miss-specifications. This expression still depends on the theoretical  null median $\mu_0$, the proportion $\pi_0$ of samples under $\mathcal{H}_0$, and the distribution functions $F(t)$ and $G(t)$, which are not known. 
However, when a large number of observations $\by_1,\ldots,\by_n$ are available, these  quantities can be estimated from the empirical distributions of the test statistics. 
Let 
\begin{align*}
 \overline{F}(t) =\frac{\#\left\{T_{\textrm{max}}(\by_i) \le t\right\} }{n}, \quad
 \overline{G}(t) =\frac{\#\left\{-T_{\textrm{min}}(\by_i) > t\right\} }{n},
\end{align*}
be the empirical distribution function of $T_{\textrm{max}}$ and the empirical survival function of $-T_{\textrm{min}}$, respectively. 
\begin{hypothesis}[Weak dependence assumption]
The empirical functions $\overline{F}(t)$ and $\overline{G}(t)$ converge uniformly toward the theoretical distribution functions $F(t)$ and $G(t)$,  respectively:
\begin{align*}
 \sup_{t} |\overline{F}(t) - F(t)| \longrightarrow 0, \quad
 \sup_{t} |\overline{G}(t) - G(t)| \longrightarrow 0,
\end{align*}
almost surely as the number $n$ of observations grows to infinity.
\end{hypothesis}
Statement A5 is verified under weak dependence conditions between the observed samples $\by_1,\ldots,\by_n$. In particular, 
this holds for independent or short-range dependent samples according to the Glivenko-Cantelli theorem. A direct consequence is the pointwise convergence in probability of $\overline{F}(t)$ and $\overline{G}(t)$ toward $F(t)$ and $G(t)$, 
respectively, for any $t\in \mathbb{R}$.

Due to zero assumptions A3 and A4 and proposition \ref{prop:minstat}, $\mu_0$ satisfies $F(\mu_0)=G(\mu_0)=\frac{\pi_0}{2}$. Therefore, based on assumption A5, 
an estimator of the null median $\mu_0$ can be searched as a solution of the following equation for $\mu$:
\begin{align}
 \overline{F}(\mu)=\overline{G}(\mu).
 \label{eq:crossmed}
\end{align}

\begin{lemma}[Empirical null median estimator]
\label{le:med}
Let $t_{(1)}<t_{(2)}<\ldots<t_{(2n)}$ be the ordered values of the statistics belonging to the sample 
$\bt=\left(T_{\textrm{max}}(\by_1),\ldots,T_{\textrm{max}}(\by_n),-T_{\textrm{min}}(\by_1),
\ldots,-T_{\textrm{min}}(\by_n)\right)$.
Let $\widehat{\mu}_0$ be the sample median of $\bt$, which is defined as
\begin{align}
 \widehat{\mu}_0= \frac{t_{(n)} + t_{(n+1)}}{2}.
 \label{eq:mu0est}
\end{align}
Then $\widehat{\mu}_0$ satisfies \eqref{eq:crossmed} and is a consistent estimator of the null median $\mu_0$, under zero assumptions A3 and A4.
\end{lemma}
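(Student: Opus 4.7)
The proof splits naturally into two claims: (i) the identity $\overline{F}(\widehat{\mu}_0)=\overline{G}(\widehat{\mu}_0)$, and (ii) the consistency $\widehat{\mu}_0 \to \mu_0$.

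For (i), I proceed by a direct counting argument. By the continuous-distribution assumption, the $2n$ values in $\bt$ are almost surely pairwise distinct, so $\widehat{\mu}_0 \in (t_{(n)}, t_{(n+1)})$ and exactly $n$ of the pooled values lie at or below $\widehat{\mu}_0$. Let $a = \#\{i : T_{\max}(\by_i) \le \widehat{\mu}_0\}$ and $b = \#\{i : -T_{\min}(\by_i) \le \widehat{\mu}_0\}$, so that $a + b = n$. Then straight from the definitions $\overline{F}(\widehat{\mu}_0) = a/n$ and $\overline{G}(\widehat{\mu}_0) = (n-b)/n = a/n$, which establishes \eqref{eq:crossmed}.

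For (ii), I introduce the difference $H = F - G$ together with its empirical counterpart $\overline{H} = \overline{F} - \overline{G}$. The discussion preceding the lemma, combining A3, A4 and Proposition \ref{prop:minstat}, gives $F(\mu_0) = G(\mu_0) = \pi_0/2$, hence $H(\mu_0) = 0$. Moreover $H$ is non-decreasing since $F$ is a CDF and $G$ a survival function; and it is strictly increasing at $\mu_0$ because for any small $\epsilon>0$ the zero assumptions yield $F(\mu_0 - \epsilon) = \pi_0 F_0(\mu_0 - \epsilon) < \pi_0/2$ and $G(\mu_0 + \epsilon) = \pi_0 G_0(\mu_0 + \epsilon) < \pi_0/2$ (using absolute continuity of the null distribution at its median), so that $H(\mu_0-\epsilon)<0<H(\mu_0+\epsilon)$. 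By A5 and the triangle inequality, $\sup_t |\overline{H}(t) - H(t)| \to 0$ almost surely, so the same strict sign inequalities propagate to $\overline{H}$ at $\mu_0\pm\epsilon$ for $n$ large enough. Since $\overline{H}$ is non-decreasing and vanishes at $\widehat{\mu}_0$ by part (i), this traps $\widehat{\mu}_0 \in [\mu_0 - \epsilon, \mu_0 + \epsilon]$ eventually almost surely, proving consistency.

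The step I expect to require the most care is the strict monotonicity of $H$ in a neighbourhood of $\mu_0$: it simultaneously needs absolute continuity of the null statistics (so $F_0$ has positive density at its median), $\pi_0>0$, and the fact that A3 and A4 are one-sided in exactly the directions that matter (controlling $F_1$ below $\mu_0$ and $G_1$ above). Once this is established, the inversion of uniform convergence to extract a zero is standard. Part (i), by contrast, is purely combinatorial and uses none of A1--A5.
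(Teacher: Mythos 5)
Your proof is correct and follows essentially the same route as the paper: the same pooled-counting argument for the identity $\overline{F}(\widehat{\mu}_0)=\overline{G}(\widehat{\mu}_0)$, and the same key inequalities from A3, A4 and the uniqueness of the median of $F_0$ showing that $\mu_0$ is the unique crossing point of $F$ and $G$. Your part (ii) merely makes explicit, via the monotone function $H=F-G$ and the uniform convergence in A5, the final localization step that the paper states more informally; this is a welcome tightening but not a different argument.
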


\begin{proof}
See appendix \ref{app:le}.
\end{proof}

Based on the null median estimator given in lemma \ref{le:med}, we can now obtain the empirical null estimates of $\pi_0$ and $F_0(t)$. 
Let
\begin{align*}
 \bs_0=\left\{  T_{\textrm{max}}(\by_i) \le \widehat{\mu}_0 \right\},
\end{align*}
be the sample set of the 
max-test statistics truncated on $(-\infty,\widehat{\mu}_0]$, the elements of which are denoted as $s_{0,i}$, for
$1 \le i \le n_0$, and where $n_0=|\bs_0|$. 
Similarly,  
\begin{align*}
\bg_0=\left\{  -T_{\textrm{min}}(\by_i) > \widehat{\mu}_0 \right\},
\end{align*}
denotes the set of the 
opposite min statistics truncated on $(\widehat{\mu}_0,+\infty)$,  the elements of which are denoted as $g_{0,i}$ for
$1 \le i \le n_0$ (according to lemma \ref{le:med}, these two sets are of equal size).

\begin{proposition}[Empirical estimators under $\mathcal{H}_0$]
\label{prop:nullest}
Under assumptions A3 and A4,
\begin{align}
 \widehat{\pi}_0&= \min\left\{ 2 {n_0}/{n},1 \right\},
 \label{eq:pi0}
 \end{align}
is a consistent estimator of the null proportion $\pi_0$, and
\begin{align}
 \widehat{F}_0(t)&= \frac{ \# \left\{ s_{0,i} \le t \right\} \, + \,  
 \# \left\{ g_{0,i} \le t \right\}}{2 n_0} 
 \label{eq:F0empest}
\end{align} 
is a pointwise consistent estimator of the null distribution $F_0(t)$,
for $t\in \mathbb{R}$.
\end{proposition}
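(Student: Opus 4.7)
My plan is to express both estimators as continuous functionals of the empirical distribution functions $\overline F$ and $\overline G$ evaluated at $t$ and at the random threshold $\widehat\mu_0$, and then to transfer limits through this data-driven threshold using the uniform convergence assumption A5 together with the consistency of $\widehat\mu_0$ established in lemma~\ref{le:med}.

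For $\widehat\pi_0$, I would start from the elementary identity $n_0/n=\overline F(\widehat\mu_0)$. Since $\widehat\mu_0 \to \mu_0$ almost surely and $\sup_t|\overline F(t)-F(t)| \to 0$ by A5, a Slutsky-type argument gives $\overline F(\widehat\mu_0) \to F(\mu_0)$. Combining A3 with the defining property $F_0(\mu_0)=1/2$ yields $F(\mu_0)=\pi_0 F_0(\mu_0)=\pi_0/2$, so $2n_0/n \to \pi_0$. Since $\pi_0 \le 1$, the truncation in \eqref{eq:pi0} is eventually inactive and $\widehat\pi_0 \to \pi_0$.

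For $\widehat F_0(t)$, I would fix $t \ne \mu_0$ and split into two cases. If $t < \mu_0$, then almost surely $t < \widehat\mu_0$ for $n$ large, so $\#\{g_{0,i}\le t\}=0$ and $\#\{s_{0,i}\le t\}=n\,\overline F(t)$; the ratio $n\overline F(t)/(2n_0)$ then converges to $F(t)/\pi_0 = F_0(t)$ by the previous step and $\overline F(t)\to F(t)=\pi_0 F_0(t)$. If $t > \mu_0$, almost surely $\widehat\mu_0 < t$ eventually, so $\#\{s_{0,i}\le t\}=n_0$ and $\#\{g_{0,i}\le t\}=n[\overline G(\widehat\mu_0)-\overline G(t)]$; using the limits $n_0/n\to\pi_0/2$, $\overline G(\widehat\mu_0)\to G(\mu_0)=\pi_0/2$, and $\overline G(t)\to\pi_0(1-F_0(t))$ (the last from \eqref{eq:expH0}), substitution into \eqref{eq:F0empest} again gives $F_0(t)$. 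The value $t=\mu_0$ can be recovered by monotonicity of $\widehat F_0$ and continuity of $F_0$.

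The principal technical subtlety is that the truncated samples $\bs_0$ and $\bg_0$ are defined through the \emph{random} cutoff $\widehat\mu_0$, so the counts appearing in \eqref{eq:F0empest} are not empirical distribution functions of a fixed sample evaluated at the deterministic $t$. This is precisely why A5 is formulated as uniform rather than pointwise convergence: uniformity is what allows one to replace $\overline F(\widehat\mu_0)$ and $\overline G(\widehat\mu_0)$ by $F(\mu_0)$ and $G(\mu_0)$ with vanishing error, legitimizing every intermediate step. Beyond that, everything reduces to routine continuous-mapping bookkeeping.
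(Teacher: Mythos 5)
Your proof is correct and follows essentially the same route as the paper's: consistency of $\widehat{\mu}_0$ from Lemma~\ref{le:med}, the uniform convergence in A5 to pass $\overline{F}(\widehat{\mu}_0)\to F(\mu_0)=\pi_0/2$, and the two-groups identity \eqref{eq:expH0} combined with Slutsky to handle $\widehat{F}_0(t)$ on either side of $\mu_0$ (your case $t>\mu_0$ computation is just the paper's identity $\widehat{F}_0(t)=1-\overline{G}(t)/\widetilde{\pi}_0$ in expanded form). If anything, you are slightly more careful than the paper about the fact that the truncation point $\widehat{\mu}_0$ is random, which the paper glosses over by writing the identity ``for $t\le\mu_0$'' rather than ``for $t\le\widehat{\mu}_0$''.
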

\begin{proof}
See appendix \ref{app:prop}.
\end{proof}

{\em Remark 1:} the dependence structure across a set of observations  $\by_1, \ldots,\by_n$ with $\by_i \in \mathbb{R}^l$ is not required to specify the empirical estimators $\widehat{\pi}_0$ and $\widehat{F}_0$ given in proposition \ref{prop:nullest}. These non-parametric estimates  rely essentially on the noise symmetry assumption A1, which is very weak. As a consequence, these 
estimators are robust to miss-specifications that are prone to occur with parametric assumptions.

{\em Remark 2:} Zero assumptions A3 and A4 provide an idealized mathematical framework for which the empirical estimators are shown to be consistent. However, these assumptions are
unlikely to be satisfied in practice. As a consequence, Eq. \eqref{eq:expH0} is an approximation. Note, however, that the closer $\pi_0$ is to one, the more accurate the approximation is. This is the case in many large-scale testing problems. where the null proportion $\pi_0$ is usually close to one. The approximation gains also in accuracy the more $F_0 (t)$  (resp. $G_0(t)$) dominates $F_1 (t)$ (resp. $G_1(t)$) for $t\le \mu_0$ (resp. for $t\ge \mu_0$). 
Moreover, if a few observations distributed according to the alternative distribution belong to the regions where they are assumed to be absent, then $\widehat{\pi}_0$ tends to be biased upward, and $\widehat{F}_0(t)$ tends to be slightly biased toward the alternative distribution $F_1(t)$. It is of note that from a statistical testing perspective, this slight bias goes in the good way. Indeed, a detection procedure based on $\widehat{F}_0(t)$ (and possibly  $\widehat{\pi}_0$) 
then becomes more conservative as $p$-values are slightly biased upward. This results in a small loss of power but the control of type I errors is still (asymptotically) guaranteed.

Figure \ref{fig:H0b} shows the empirical density functions associated with 
the max-test statistics $T_{\textrm{max}}(\by_i)$ and the opposite min statistics 
$-T_{\textrm{min}}(\by_i)$ for synthetic data with a testing framework 
that mimics the MUSE one.
We can see that the max density has a heavier right tail than the opposite min one. This is due to the 
contribution of the $\mathcal{H}_1$ samples, while the opposite min density right tail is (approximately) distributed according to the theoretical 
null density due to (approximation) A4. By symmetry, the opposite min density has a heavier left tail than the max density one.

To appreciate the accuracy of the empirical estimators given in proposition 
\ref{prop:nullest}, the empirical null-density function that is obtained from the right-truncated sample $\bs_0$ and the
left-truncated sample $\bg_0$ that are used to construct $\widehat{F}_0(t)$, is depicted in Figure \ref{fig:H0c}. 
Here, the null proportion estimate is larger than the theoretical value: $\widehat{\pi}_0=0.89$ and $\pi_0=0.81$.
Nevertheless, the empirical null-density function is very close to the theoretical one. 
This is confirmed by the quantile-quantile plot between $\widehat{F}_0$ and the theoretical distribution $F_0$
shown in Figure \ref{fig:H0d}. In particular, this remains true in the distributions tails, where the accuracy of the 
quantile estimates is crucial to the robustness of the test at low control levels.

\begin{figure}[hbtp!]
\centering
\includegraphics[width=0.8\linewidth]{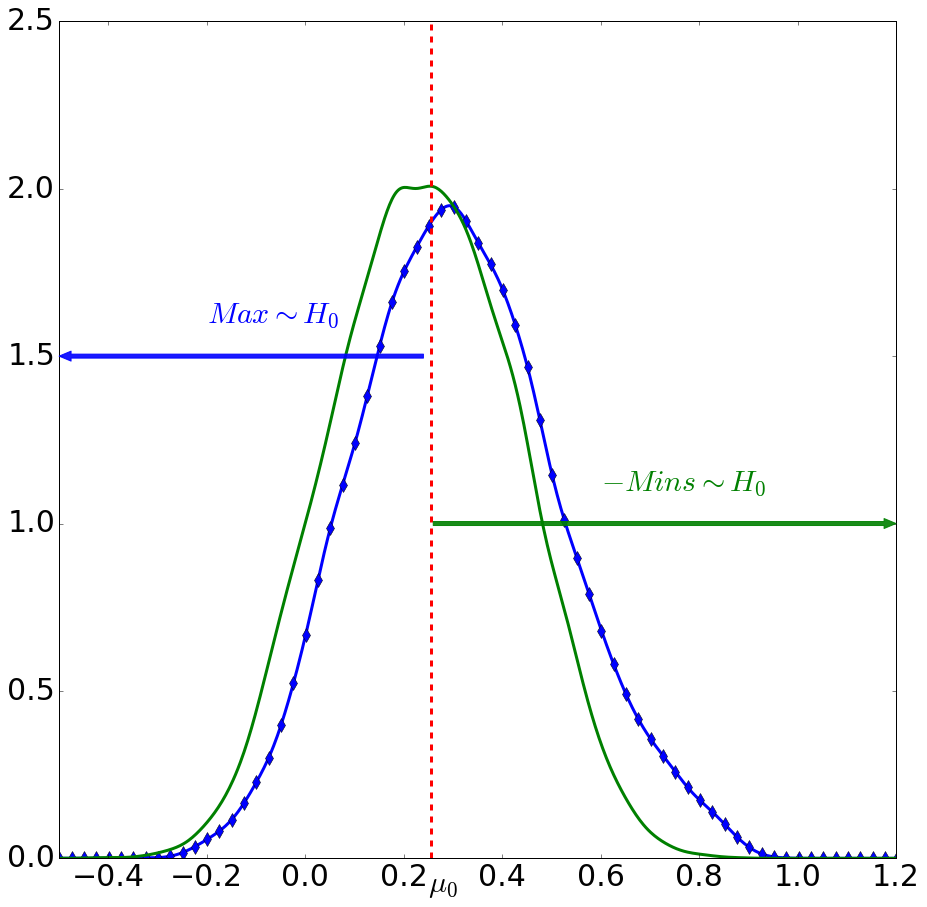}
\caption{Empirical density functions of the max-test statistics $T_{\textrm{max}}(\by_i)$ (blue line with $\begingroup\color{blue}\blacklozenge\endgroup$ markers)
and of the opposites of min statistics $-T_{\textrm{min}}(\by_i)$ (green line) for $n=2500$ independent samples $\by_1,\ldots,\by_n \in \mathbb{R}^l$ generated from the observation model $\eqref{eq:testingPb0}$ with $l=30$. Noise vector $\beps_i$ has i.i.d. entries that follow a student distribution with $\nu=5$ degrees of freedom. The proportion of null hypotheses is $\pi_0 = 0.81$. Samples under $\mathcal{H}_1$ are generated as $\by_i= a_i\bd+\beps_i$,  where $a_i \in [0.1,3]$ and $\bd$ has unit length. The min and max-test statistics are obtained from a dictionary $\bD$ with $m=15$ positively correlated atoms, including $\bd$, and using the SAD similarity measure.
\label{fig:H0b}}
\end{figure}

\begin{figure}[hbtp!]
\centering
\subfloat[Empirical density function of the sample set $\bt_0=\bs_0 \cup \bg_0$, where $\bs_0$ and  $\bg_0$ are the sample sets used to construct
$\widehat{F}_0(t)$ (blue curve) and theoretical density function of $T_{\textrm{max}}(\by)$ under $\mathcal{H}_0$ (green curve with  $\begingroup\color{OliveGreen}\blacklozenge\endgroup$ marker).\label{fig:H0c}]{
\includegraphics[width=0.8\linewidth]{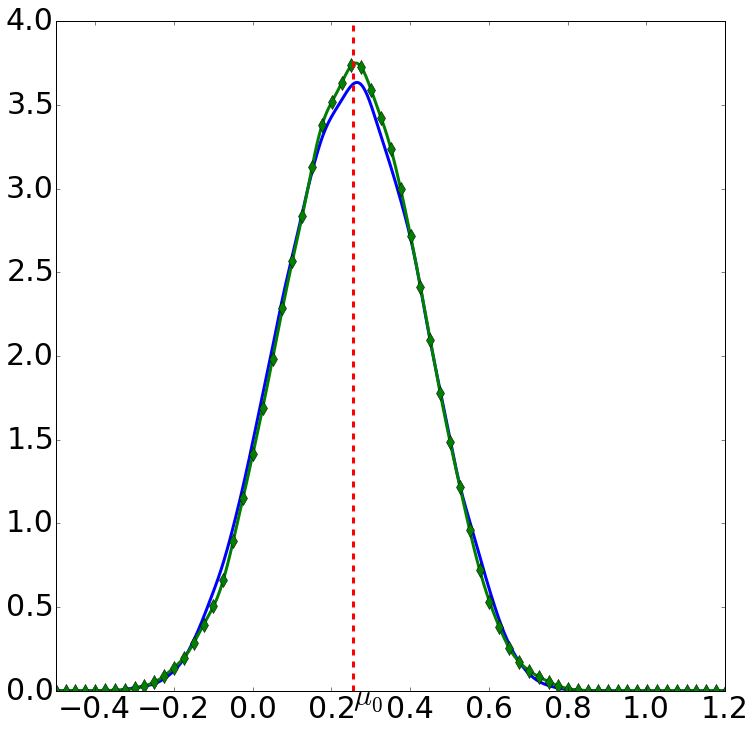}}\\
\subfloat[QQ-plot of $\widehat{F}_0(t)$ against the theoretical distribution $F_0$, $\begingroup\color{OliveGreen}\bullet\endgroup$ marker,
and $y=x$ line (red dashed line). \label{fig:H0d}] {
\includegraphics[width=0.8\linewidth]{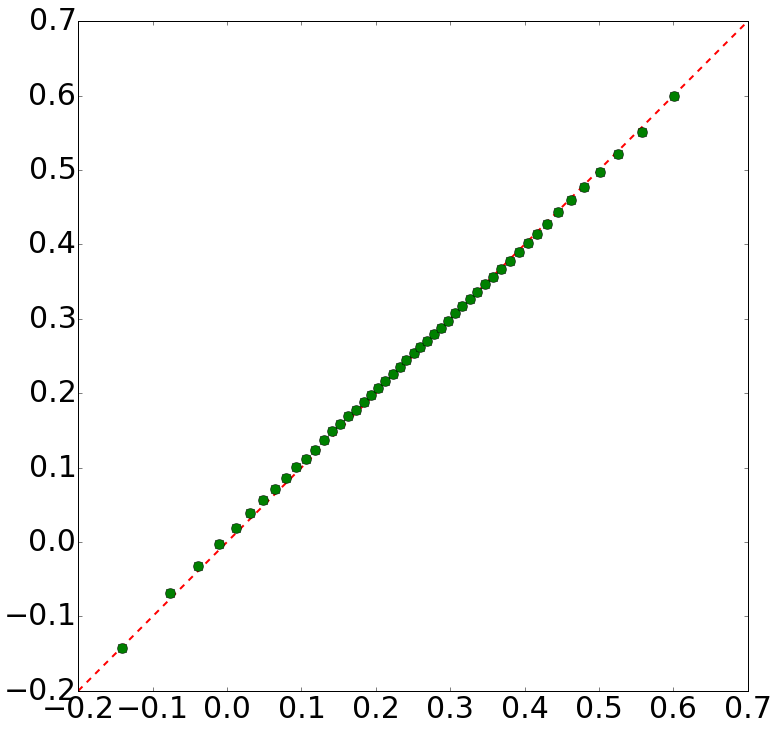}}
\caption{Comparisons between the empirical null distribution  estimator $\widehat{F}_0(t)$ and the theoretical distribution of $T_{\textrm{max}}(\by)$ under $\mathcal{H}_0$. Data are generated using the same setting as in Figure \ref{fig:H0b}. The theoretical distribution function was estimated using $10^5$ Monte-Carlo runs.\label{fig:compareH0}}
\end{figure}

\subsection{Error control}

In multiple testing (around $n=2500$ tested pixels for a $50\times50$ patch in the MUSE context), the classical Type I error control of each individual test might not be appropriate; see e.g., \cite{meillier2015error,genovese2002thresholding}. Indeed the number of wrongly rejected null hypotheses can become relatively important (i.e., even larger than
the number of true detections) due to the high number of tests. To address this kind of issue, a global error control approach, namely the FDR,  was introduced in \cite{benjamini1995controlling}.
The FDR controls the expected proportion of true null hypotheses wrongly rejected, which are referred to as the {\em false discoveries}, among all of the rejected tests:
\begin{align*}
 \textrm{FDR} &= E \left[ \frac{U}{\max{(R,1)}} \right],
\end{align*}
where $R$ is the total number of tests where the null hypothesis is rejected, while $U$ is the number of false discoveries among the $R$ discoveries.
A simple and widely used approach to control this FDR is the Benjamini and Hochberg (BH) procedure that was also developed in \cite{benjamini1995controlling}.
Let $p_i$ be the $p$-value associated with the $i$th test statistics.  
Let $p_{(1)} \le p_{(2)} \le \ldots \le p_{(n)}$ now be the ordered $p$-values, and 
$\mathcal{H}_0^{(1)}, \ldots, \mathcal{H}_0^{(n)}$ denote the null hypotheses for this ordering.
For a preselected control level $0\le q \le 1$, the BH$_q$ procedure rejects $\mathcal{H}_0^{(1)}, \ldots, \mathcal{H}_0^{(\hat{k})}$
where 
\begin{align*}
\hat{k}= \max{\left\{0 \le k \le n : \  p_{(k)} \le q\frac{k}{n}\right\}},
\end{align*}
with $p_{(0)}=0$ by convention.
In our right-sided detection framework, the $i$th {\em empirical} $p$-value, can be derived 
from the empirical null distribution given in proposition \ref{prop:nullest} as
\begin{align}
 p_i= 1-\widehat{F}_0 \left( T_{\textrm{max}}(\by_i) \right), \quad \textrm{for } 1\le i \le n.
 \label{eq:pvalemp}
\end{align}

Then in the case of $n$ independent tests, or under specific positive dependences \cite{benjamini2001control}, the BH$_q$ procedure controls the FDR at a level
$\pi_0 q \leq q$.
Thus, if $\pi_0$ is known, the BH procedure can be applied at the nominal level $\frac{q}{\pi_0}$ to improve its power while controlling FDR at level $q$.
Building on this idea, Storey \cite{storey2004strong} proposed the following {\em modified} estimator of the null proportion  $\pi_0$:
\begin{align*}
  \hat{\pi}_0^{*}(\zeta) = \min\left\{ \frac{ 1+ \#\{ p_{i} >\zeta \}}{(1-\zeta) n}, 1 \right\}, \quad \textrm{for } \zeta \in [0,1),
\end{align*}
where $\zeta$ has to arbitrarily fixed (usually at $\frac{1}{2}$).
Storey showed that under the weak dependence assumption A5, the BH$_{q'}$ procedure at nominal level $q'=q/\hat{\pi}_0^{*}(\zeta)$ asymptotically 
controls the FDR at level $q$. We show in the following that the same strategy can be applied with the empirical null-estimator $\widehat{\pi}_0$.

\begin{proposition}[Storey $\pi_0$ estimator]
\label{prop:storey}
The empirical null estimator $\widehat{\pi}_0$ defined in \eqref{eq:pi0} 
and the Storey estimator $\hat{\pi}_0^{*}(\zeta)$ derived from the empirical
$p$-values defined in \eqref{eq:pvalemp} are equal for any $\zeta= \frac{k}{2n_0}$ with $k\in \{n_0,\ldots,2n_0-1\}$, and are 
asymptotically equivalent for any $\zeta \in [\frac{1}{2},1)$.
\end{proposition}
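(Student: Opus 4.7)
The plan is to exploit the fact that $\widehat{F}_0$, as defined in~\eqref{eq:F0empest}, is the ECDF of $2n_0$ values with support on a grid of size $1/(2n_0)$, so the empirical $p$-values lie on a predictable arithmetic grid. First I would characterize the set of $p$-values that exceed $1/2$. Since every $g_{0,j}$ strictly exceeds $\widehat{\mu}_0$, for any $\by_i$ with $T_{\max}(\by_i)\le\widehat{\mu}_0$ (i.e.\ with $T_{\max}(\by_i)\in\bs_0$) the value $\widehat{F}_0(T_{\max}(\by_i))$ only counts elements of $\bs_0$, hence lies in $\{1/(2n_0),2/(2n_0),\ldots,n_0/(2n_0)\}$. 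Ranking the $s_{0,i}$'s in increasing order, the $k$-th smallest produces $p_i=1-k/(2n_0)$. Conversely, when $T_{\max}(\by_i)>\widehat{\mu}_0$ all $n_0$ entries of $\bs_0$ and at least one $g_{0,j}$ are counted, so $\widehat{F}_0(T_{\max}(\by_i))\ge(n_0+1)/(2n_0)>1/2$ and $p_i<1/2$. Thus the $p$-values that lie in $[1/2,1)$ are exactly the $n_0$ values $\{1-j/(2n_0):j=1,\ldots,n_0\}$.

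Next I would prove the exact equality. Fix $\zeta=k/(2n_0)$ with $k\in\{n_0,\ldots,2n_0-1\}$. Using the characterization above, $p_i>\zeta$ iff $p_i=1-j/(2n_0)$ for some $j\in\{1,\ldots,n_0\}$ with $j<2n_0-k$. Since $k\ge n_0$ we have $2n_0-k-1\le n_0-1$, so
\begin{equation*}
\#\{p_i>\zeta\}=2n_0-k-1.
\end{equation*}
Plugging this count into Storey's formula gives
\begin{equation*}
\hat{\pi}_0^{*}(\zeta)=\min\!\left\{\frac{1+2n_0-k-1}{(1-k/(2n_0))\,n},\,1\right\}=\min\!\left\{\frac{2n_0}{n},\,1\right\}=\widehat{\pi}_0,
\end{equation*}
which proves the first claim.

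For the asymptotic equivalence on $[1/2,1)$, I would use the same characterization to sandwich $\#\{p_i>\zeta\}$ between $2n_0(1-\zeta)-1$ and $2n_0(1-\zeta)$ for every $\zeta\in[1/2,1)$ (the upper bound comes from the definition, the lower bound from the fact that $\zeta$ falls between two consecutive $p$-values on the grid). This yields
\begin{equation*}
\frac{2n_0}{n}\le\frac{1+\#\{p_i>\zeta\}}{(1-\zeta)n}\le\frac{2n_0}{n}+\frac{1}{(1-\zeta)n},
\end{equation*}
so $|\hat{\pi}_0^{*}(\zeta)-\widehat{\pi}_0|\le 1/((1-\zeta)n)\to 0$ as $n\to\infty$ (the $\min$ with $1$ is compatible with this bound since it is $1$-Lipschitz). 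Combined with the consistency of $\widehat{\pi}_0$ established in Proposition~\ref{prop:nullest}, this yields the asymptotic equivalence of the two estimators.

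I do not expect a serious obstacle: the proof is essentially a bookkeeping argument based on the grid structure of $\widehat{F}_0$. The only subtlety is making sure the indexing is correct at the boundary cases $k=n_0$ and $k=2n_0-1$, and handling the $\min\{\cdot,1\}$ clipping carefully in the asymptotic bound; both are routine.
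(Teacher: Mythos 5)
Your proof is correct and follows essentially the same route as the paper's: both arguments rest on the observation that for the $n_0$ pixels with $T_{\textrm{max}}(\by_i)\le\widehat{\mu}_0$ the empirical $p$-values \eqref{eq:pvalemp} sit exactly on the grid $\{1-j/(2n_0)\}_{j=1}^{n_0}$, which gives $\#\{p_i>k/(2n_0)\}=2n_0-k-1$ and hence $\hat{\pi}_0^{*}(k/(2n_0))=\min\{2n_0/n,1\}=\widehat{\pi}_0$ after cancellation in Storey's formula. Where you differ is the asymptotic step: the paper rounds $\zeta$ down to the nearest grid point $k_\zeta/(2n_0)$ and invokes (somewhat informally) the asymptotic equivalence of $\hat{\pi}_0^{*}$ at the two nearby arguments, whereas you sandwich $\#\{p_i>\zeta\}$ between $2n_0(1-\zeta)-1$ and $2n_0(1-\zeta)$ and obtain the explicit bound $|\hat{\pi}_0^{*}(\zeta)-\widehat{\pi}_0|\le 1/((1-\zeta)n)$; this is a cleaner and more quantitative finish, and correctly handles the clipping by $1$ via Lipschitz continuity of $\min\{\cdot,1\}$. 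One tiny imprecision: for a pixel with $T_{\textrm{max}}(\by_i)>\widehat{\mu}_0$ you claim at least one $g_{0,j}$ is counted so that $p_i<1/2$; in the edge case where $t_{(n+1)}$ is itself a max statistic, the smallest such pixel has $\#\{g_{0,j}\le T_{\textrm{max}}(\by_i)\}=0$ and $p_i=1/2$ exactly. This does not affect any of your counts, since $\#\{p_i>\zeta\}$ uses a strict inequality and $\zeta\ge 1/2$, but the "exactly $n_0$ values in $[1/2,1)$" statement should be weakened accordingly.
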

\begin{proof}
See appendix \ref{app:storey}.
\end{proof}
\noindent This equivalence is not surprising since, like the proposed empirical null estimators,  
Storey estimator is based on a zero assumption (i.e., the $p$-value density function under $\mathcal{H}_1$ is zero on $(\zeta,1]$).

This leads us to consider the following multiple testing procedure described in Alg. \ref{alg:FDR}.
\begin{algorithm}[H]
\caption{FDR-based detection procedure\label{alg:FDR}}
\label{algo:FDR}
\emph{Input:}  nominal FDR level $q$ 
\begin{enumerate}
 \item compute the empirical null estimators $\widehat{\pi}_0$ and $\widehat{F}_0$ as defined in proposition \ref{prop:nullest};
 \item compute the empirical $p$-values according to \eqref{eq:pvalemp};
 \item apply the BH procedure at a nominal control 
level $q / \widehat{\pi}_0$.
\end{enumerate}
\end{algorithm}

Note that it has been shown in \cite{meillier2015error} that for matched filter statistics, with a non-negative template and under Gaussian assumptions, the test statistics obey a positive regression dependence on a subset (PRDS) condition. Therefore, the BH procedure ensures exact FDR control in finite sample settings. Here the problem is more
complex. The test statistics are derived from extreme values that can be correlated. Then 
PRDS is difficult to ensure theoretically, even under Gaussian assumptions on the noise. However, under weak dependence assumption A5, 
an Oracle procedure similar to Alg. \ref{algo:FDR}, but where the $p$-values 
are computed from the theoretical null distribution $F_0$, can be proven to control asymptotically the FDR at level $q$ \cite{storey2004strong}. 
As discussed in the previous subsection, the $p$-value empirical estimates tend to be slightly biased in a 
conservative way. Moreover, if the null distribution can be estimated on a larger sample than the sample to be tested, 
the variance of these estimates can be reduced. This supports the asymptotic control of the proposed procedure.

\subsection{Validation}

Figure \ref{fig:fdr_control} shows the FDR obtained with Alg. \ref{alg:FDR}, on 3D (spatial + spectral) simulated data that are subjected to weak dependence (spatial kernel convolution), for different levels of nominal control $q$ and different signal-to-noise ratio (SNR). The SNR is defined here as $10 \log\frac{A}{n  l \sigma^2}$, where $n$ is the number of pixels (i.e., the number of tests to perform), $l$ is the number of spectral bands (i.e., the dimension of the observations $\by_i$), 
$\sigma^2$ is the marginal variance of the noise, and $A=||\bx||^2$ is the energy of the 3D contribution of the signals to be detected. The experimental set-up is similar to Figure \ref{fig:H0b}, except that a spatial convolution kernel of size 3 by 3 was applied to create local spatial 
correlations. The empirical null estimators defined on section \ref{subsec:H0} were computed from extended cubes of 200 by 200 pixels by 30 wavelengths.

This figure emphasizes that control of the FDR is correctly achieved for the different SNR levels. As expected, due to the zero assumption approximations, Alg. \ref{alg:FDR} is a little more conservative than the Oracle one (based on the true $F_0$ and $\pi_0$) at low SNR, where the alternative distribution is closer to the null one.

\begin{figure}[h]
\centering
\includegraphics[width=0.8\linewidth]{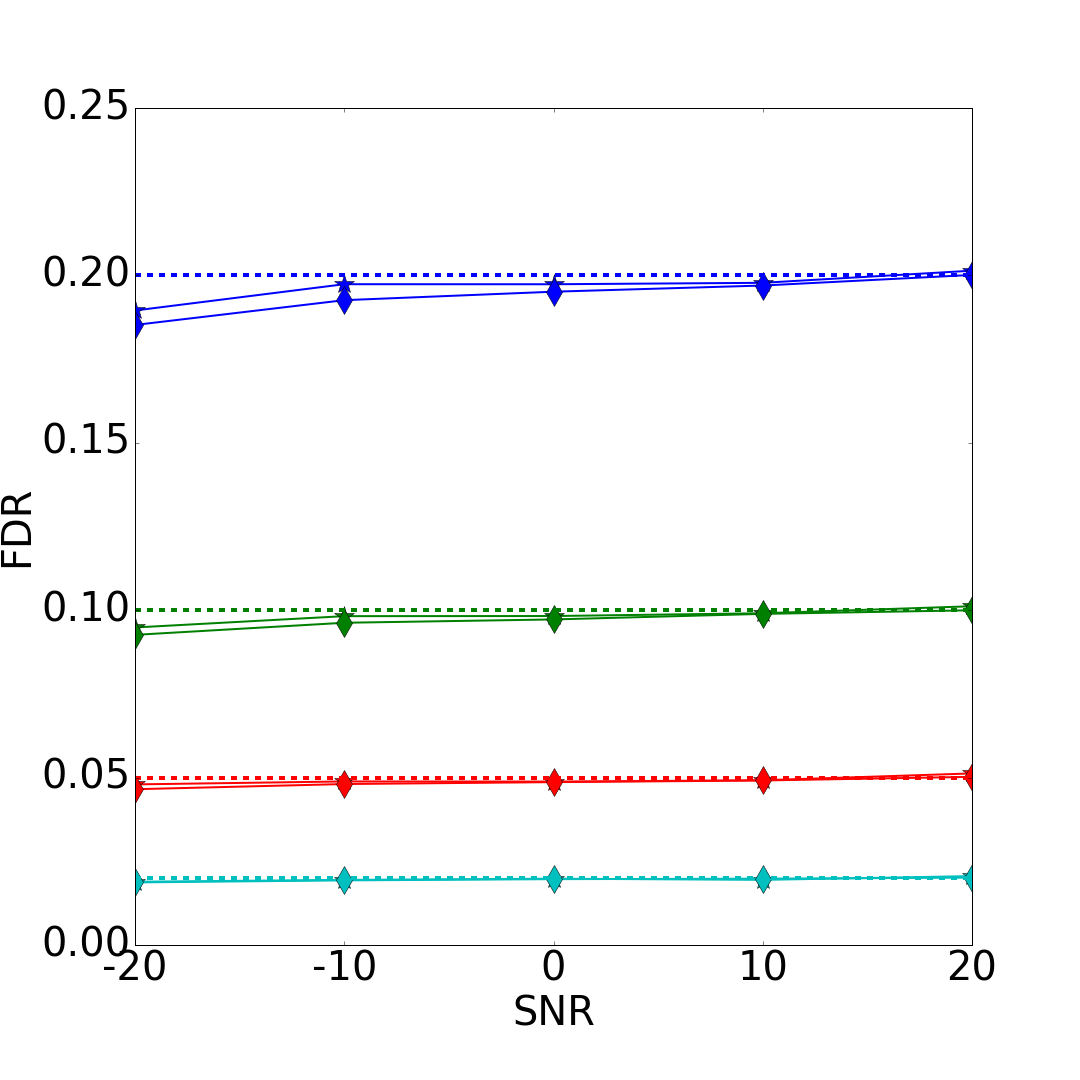}
\caption{Empirical FDR (averaged from 1000 Monte-Carlo runs) versus the SNR under weak dependence, for different levels of FDR control $q=0.02,0.05,0.1,0.2$ (in cyan, red, green, and blue, respectively).
Dashed horizontal lines, nominal control levels $q$;  $\blacklozenge$ curves, FDR for Alg. \ref{alg:FDR} empirical procedure; $\star$ curves, FDR for  Oracle procedure based on the 
true (computed from  $10^5$ Monte-Carlo runs) $F_0$ and $\pi_0$. Data are generated as 3D cubes of $n= 51 \times 51$ pixels by $l=30$ wavelengths. 
}
\label{fig:fdr_control}
\end{figure}

Table \ref{tab:FDR_PFA} shows the advantage of assuring a global control with a detection threshold that adapts to the data. It compares this global control with a pixel-wise control based on the probability of false alarm (PFA). Controlling with a $\eta$ (e.g. 5\%) PFA threshold results in detecting all pixels with $p$-values smaller than $\eta$. Such a PFA control then ensures that in average a fraction $\eta$ of all the tested pixels will be wrongly detected (but says nothing of the proportion of these wrong detections among the detected set).

When confronted to noise-only data, a control procedure with PFA at level 5\% detects 144 spurious pixels, that is the size of a possible source. To insure that no source is falsely detected, we may turn to a more conservative level, e.g. 0.1\%; this results in poor source detection power (around 55\%) whereas a 5\% level led to very good power (82\%) at the price of a large number of false alarms (false discovery proportion $\simeq 43\%$). On the same dataset, a FDR control at 20\% does not lead here to any false detection in the absence of source while maintaining a high detection power (72\%) in presence of a source, thus adapting to the data. Table \ref{tab:FDR_PFA} shows that the false discovery proportion is around 18\% for a nominal FDR control of 20\%. Note that by applying our procedure we can estimate the FDR level that matches a given detection threshold. 
For instance a threshold on the $p$-values associated with a 5\% PFA level yields a FDR estimate around 44\% on data tested here.
Table \ref{tab:FDR_PFA} shows that the false discovery proportion is indeed around 44\% for a threshold corresponding to a 5\% PFA. 

\begin{table}
{

\caption{Comparison between FDR and PFA control on data with and without target. Data was built from noise-only regions in MUSE real data and a synthetic source was added. Number of tests is 2500 (50 by 50 pixels) and source size is 185 pixels. Results were averaged on 5 different regions.}
\begin{tabular}{c|c|c|c}
 & PFA 0.05 & PFA 0.001 & FDR 0.2  \\
\hline
\textit{Noise-only} & & \\
False Detections (pixels) & 144 & 2 & 0 \\
True Detections (pixels) & 0 & 0 & 0 \\
\hline
\textit{Source + noise area} & & & \\
False Detections (pixels) & 117 & 2 & 30 \\
True Detections (pixels) & 153 & 106 & 133  \\
False discovery proportion (\%) & 43.3 & 1.8 & 18.4  \\
\end{tabular}
\label{tab:FDR_PFA}
}
\end{table}

In the next paragraph the ability of the proposed method to control error rate is compared with a generalized likelihood ratio (GLR) approach (inspired by \cite{paris2013detection} and \cite{courbot2016detection}). Noise is supposed centered Gaussian $\beps \sim \mathcal{N}(\b0,\bSig)$ where the covariance matrix $\bSig \in \mathbb{R}^{l\times l}$ is assumed to be diagonal.
We have the following detection test :
$$
\left\{
\begin{array}{ll}
   \mathcal{H}_{0} : \by=\beps,\\ 
    \mathcal{H}_{1} : \by=\bD\ba+\beps, \text{ with } ||\ba||_0=1, \ba\geq\b0
\end{array}
\right.
$$
where $||.||_0$ is the $\ell_0$-pseudo-norm (number of non-zero components) and $\ba\geq\b0$ is the non-negativity constraint on the coefficients.
The GLR test with 1-sparsity constraint yields the following test statistic(\cite{paris2013detection})
$$T_{GLR}(\by)=\frac{\max\limits_{\ba} p(\by|\bD\ba,\mathcal{H}_1)}{p(\by|\mathcal{H}_0)} \text{ s.t. } ||\ba||_0=1, \ba\geq\b0,$$
where $p(\by|\bD\ba,\mathcal{H}_1)$ denotes the probability density function of $\by$ under $\mathcal{H}_1$ and $p(\by|\mathcal{H}_0)$ denotes the probability density function of $\by$ under $\mathcal{H}_0$.
Using the Gaussian assumption it comes that
$$T_{GLR}(\by) = \frac{\bd_{\hat{j}}^T\hat{\bSig}^{-1}\by}{\sqrt{\bd_{\hat{j}}^T\hat{\bSig}^{-1}\bd_{\hat{j}}}}$$
where $\hat{j}$ is the index of the non-zero component of the optimal $\hat{\ba}$ for the GLR statistics, and $\hat{\bSig}$ is estimated from the residuals.
There is no closed-form expression  of the distribution of this statistic since it consists in taking the max of a correlated Gaussian vector. Thus we calibrated this statistic under $\mathcal{H}_0$ (normal centered noise) by Monte-Carlo.

Figures \ref{fig:GLR_norm} and \ref{fig:GLR_stu} illustrate the main advantage of the proposed method : the control is ensured when the noise distribution is symmetrical without further assumptions. The GLR approach has to be calibrated under $\mathcal{H}_0$ distribution so any deviation from the theoretical $\mathcal{H}_0$ results in a loss of control, as illustrated by figure 5 where the noise is drawn from a Student distribution. It can be seen that BH procedure based on the theoritical $\mathcal{H}_0$ GLR statistics do not correctly control FDR. In a first time the effective FDR strongly exceeds the given control level (allowing GLR to be ``more powerful" at a given nominal control level). In a second time it becomes too conservative. This comportment can be explained by the Gaussian fit of the  Student distribution of noise: tails are underestimated (hence the excess in FDR) whereas the bulk is overestimated (hence the loss in power in the second part).
It should be noted that a classical ROC curve of the two methods would show very similar performance (same power for an effective error budget) between the two approaches but would hide the inaccuracy of error control of the GLR approach.
Moreover figure \ref{fig:GLR_norm} shows that when GLR is at its best (adequate model), the proposed method does stays really close in term of power despite its versatility. 

\begin{figure}[hbtp]
\centering
\includegraphics[width=\linewidth]{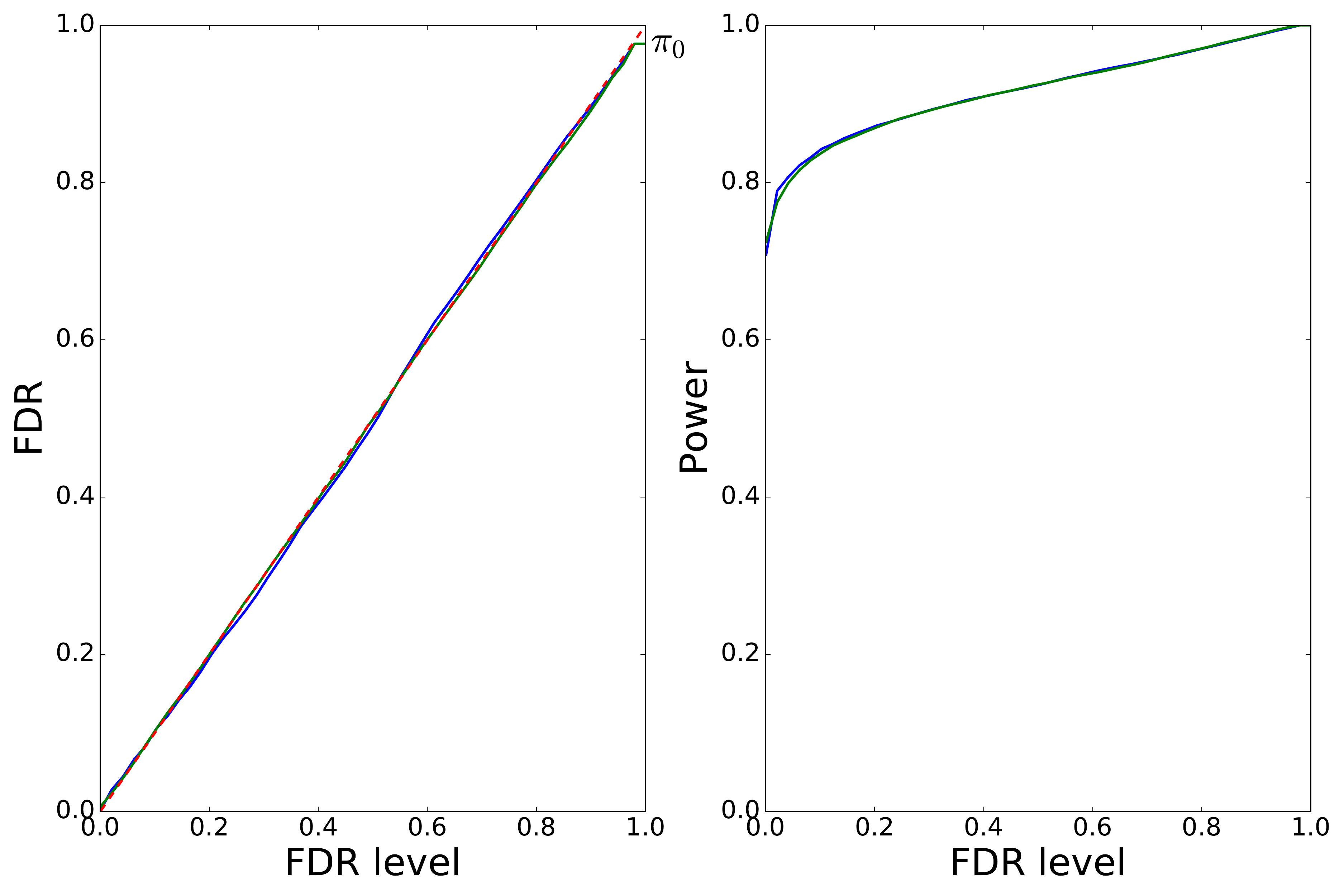}

\caption{True FDR (left) and power (right) versus nominal FDR control level, between GLR and proposed method, on synthesized data with Gaussian noise. The GLR was calibrated by $10^4$ Monte-Carlo runs on normal noise. $\bSig$ was estimated on the data. Results were averaged on 200 runs of simulated data cubes with $\pi_0=0.97$. GLR is in blue, proposed method in green, $y=x$ in dashed red. Matched filter similarity measure (4) was used for the proposed method.}
\label{fig:GLR_norm}
\end{figure}

\begin{figure}[hbtp]
\centering
\includegraphics[width=\linewidth]{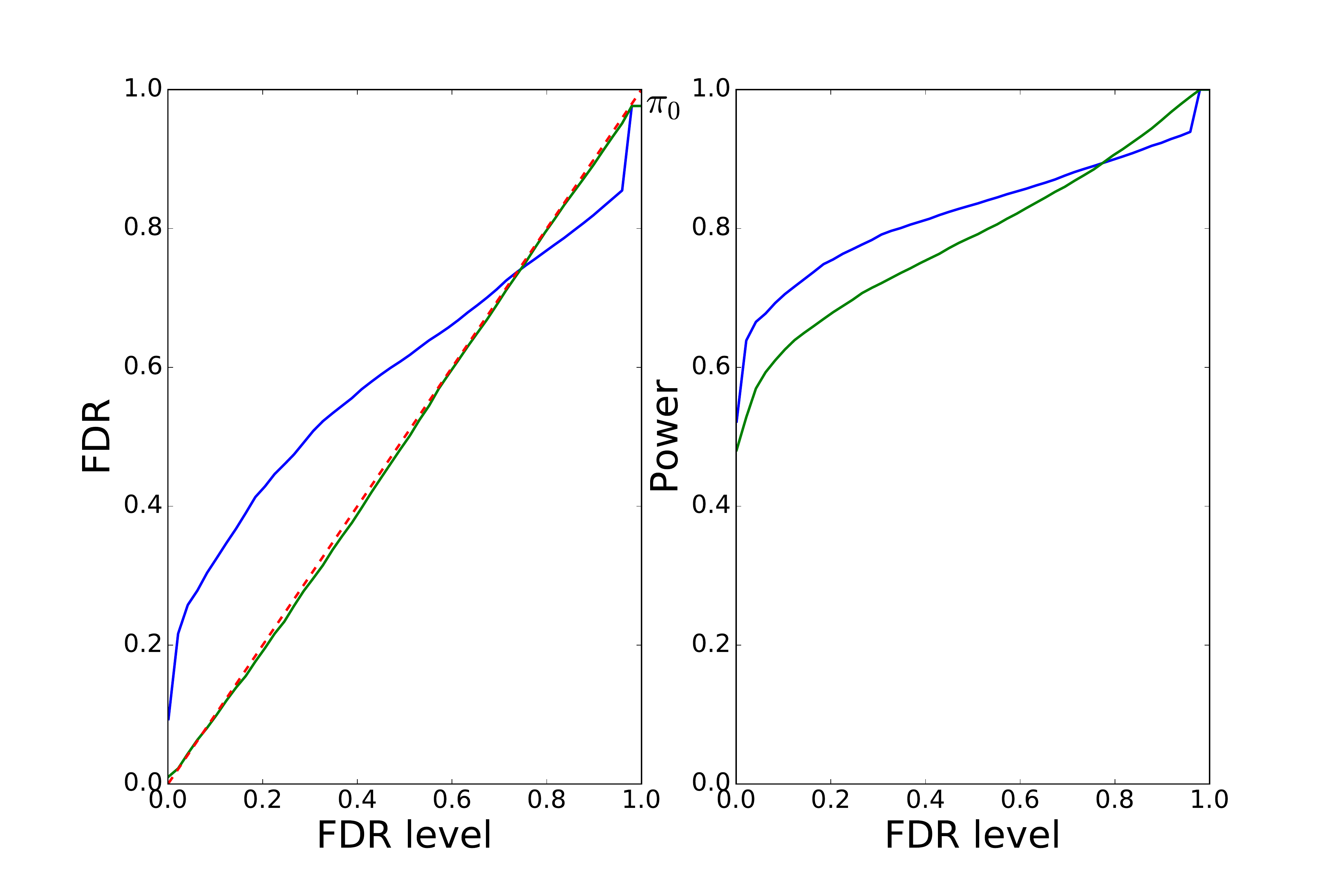}
\caption{True FDR (left) and power (right) versus nominal FDR control level, between GLR and proposed method, on synthesized data with Student noise (4 degrees of freedom). The GLR was calibrated by $10^4$ Monte-Carlo runs on normal noise. $\bSig$ was estimated on the data. Results were averaged on 200 runs of simulated data cubes with $\pi_0=0.97$. GLR is in blue, proposed method in green, $y=x$ in dashed red. Matched filter similarity measure (4) was used for the proposed method.}
\label{fig:GLR_stu}
\end{figure}

\section{Application to the MUSE data}
\label{sec:appli}
It is believed that young galaxies are often surrounded by halos of hydrogen gas, known as the circum galactic medium. The emissions from these halos can be several orders of magnitude fainter than those from the galaxies. Furthermore, the emission spectrum of the halos are composed of narrow lines, notably with the Lyman-$\alpha$ line. 
The MUSE \cite{bacon2010muse} was developed to detect such emissions.
 It is a 3D spectrograph that can image and analyze a field of 1 arcmin$^2$ by producing a hyperspectral data cube of 300 by 300 pixels by 3600 wavelengths. Its spectral range covers the visible and near-infrared domain, from 450 nm to 930 nm.

Since the MUSE first light in January 2014, several studies \cite{paris2013detection,meillier2016} have already been conducted on the MUSE data. The aim has been to detect faint young galaxies, which are also characterized by the presence of a powerful Lyman-$\alpha$ emission line in their emission spectrum. 
Both methods mostly assume spatially and spectrally punctual sources. As a consequence, they efficiently find the core of galaxies, but are not (yet) adapted to detect the faint extended halos. 

The purpose here is to explore the vicinity of these already detected galaxies, and track the Lyman-$\alpha$ emission line as far away from the galaxy as possible. 

\subsection{The MUSE data}
The proposed detection method is applied to the MUSE observations of the sky region known as Hubble Deep Field South (HDFS), as it was previously observed by the Hubble space telescope.
The MUSE produces huge amounts of data that have to be processed by a data reduction system before they can be used for scientific analysis. In particular, a resampling process creates local correlations in all directions (spatial and spectral) between voxels that cannot be easily modelled due to the data dimensions.
The data reduction system applied to the data used here is detailed in \cite{bacon2015muse}. The output is a 300 by 300 pixels by 3600 wavelengths data cube that is associated with a variance cube of the same dimensions. This latter is estimated by propagating the error estimated at the captor level at each stage of the processing.

A catalog of astronomical objects in HDFS was built in \cite{bacon2015muse}. About 90 of these objects are remote galaxies known as Lyman $\alpha$ emitters that are likely to have a halo. 
For each of these sources, a spatial-spectral neighborhood is defined, which is centered spatially on the galaxy center and spectrally on the emission line peak, and a subcube of 50 by 50 pixels by 30 wavelengths is extracted. This cube extraction is performed for the two following reasons.
Spectrally: the signal of interest (hydrogen Lyman $\alpha$ emission line) is concentrated in a few wavelengths around the emission peak. Outside of this domain, galaxy spectra (used to built reference spectrum) can contain other features that are not present in the targeted hydrogen surrounding halo. 
Spatially: as the targeted halo is expected to stay close to the galaxy, exploring empty (only noise) remote regions would only result in a loss in power (in our global control procedure).

\subsection{Pre-processing workflow}
To deal with the MUSE data, several pre-processing steps are needed. First, the spectral continuum is robustly estimated and removed in each pixel using \cite{bacher2016robust}. Then coarse reduction of the data is carried out using the variance cube provided with the data, followed by robust centering and finer reduction, slice by slice \cite{meillier2015detection}.
After the reduction by the variance cube, we can make the following assumption of stationarity of the noise.
\begin{hypothesis}[Stationary noise]
The noise is stationary on each wavelength-slice of the cube.
\end{hypothesis}
~~\\
\subsubsection*{Spatial matched filter preprocessing}
~~\\
For ground-based astronomical instruments such as MUSE, the spatial system impulse response (FSF) is mainly due to atmospheric turbulence. This FSF is measured for each observation (see \cite{villeneuve2011psf}), and is independent of the instrumental noise. As such, we can make the following assumption:
\begin{hypothesis}
The noise in the observation model \eqref{eq:testingPb0} is not filtered by the FSF.
\end{hypothesis}

Based on A7, the following strategy was chosen to improve the SNR: a spatial convolution with the FSF (which is modeled as a symmetrical function) is applied to each image of the wavelength axis of the data cube. It is not strictly speaking a spatial matched filter to the searched halo. Indeed, the halo extension and its intensity profile are not known. 
However, this greatly improves the SNR. The price to be paid is that the theoretical spatial extension of the halo is enlarged by this operation. In practice, the halo 
has a larger extension than the FSF, with an intensity profile that, as does the FSF, decreases quickly toward zero on its support. Therefore,  
this effect can be neglected in the detection results.

\subsection{Detection}
\label{ssec:detection}
In the application on hand, we have the following assumptions:
\begin{enumerate}
\item The galaxy spatial center is already known, as well as the spectral position of the emission line in the galaxy spectrum (with e.g., the method developed in \cite{meillier2016});
\item The emission line in the halo spectra has a shape similar to the emission line in the galaxy spectrum, the continuum of which has been subtracted,  but can present a shift along the spectral wavelengths;
\item Samples (pixels) are weakly dependent.
\end{enumerate}

The second hypothesis is only partially true: in reality the redshift is not a simple spectral shift, but the composition of a shift and a dilatation. However, at the spectral resolution, the deformation can be neglected.
The third assumption is justified because dependences between pixels are mainly due to interpolations during the resampling step, and these dependences have short-range effects. Moreover, in the pre-processing workflow, only the convolution by the FSF creates significant spatial dependences, once again with a short-range effect due to the finite support of the FSF.
Thus, the weak dependence assumption A5 is fulfilled. 
Assumptions required to apply the proposed method are also fulfilled. MUSE data result from the summing of a high number of exposures thus the noise tends to be Gaussian (and as such symmetrical) by application of the central limit theorem. The following numerous preprocessing steps (background subtraction, centering, variance reduction,...) all keep the symmetrical property of the centered noise thus ensuring that assumption A1 hold. Assumption A2 (odd similarity measure) is fulfilled by construction as we choose the SAD measure. As pointed out in \textit{Remark 2} of II-C, assumptions A3 and A4 can't be strictly guaranteed. However, outside of this ideal framework, the key point is that equation \eqref{eq:expH0} is a good enough approximation. Indeed, the targeted signal is assumed to be distinct enough from the background noise and well approximated by the dictionary (see section III-C1); thus $T_{\textrm{max}}(\by)$  will be significantly stochastically larger under $\mathcal{H}_{1}$ than under $\mathcal{H}_{0}$ for detectable signals. Moreover the galaxy and halo pixels are supposed to be in strong minority in the spatially explored region, that is $\pi_0$ is close to one, which enforces these approximations. Finally, as stressed in  \textit{Remark 2}, the approximation errors in equation \eqref{eq:expH0} can only lead to a small loss in power and the control is still guaranteed (the bias is conservative as shown in figure \ref{fig:fdr_control} for low SNR).

For a given galaxy, a spatial-spectral neighborhood is defined, centered spatially on the galaxy center and spectrally on the emission line peak.
Based on these hypotheses, the approach developed here consists of applying the following steps.
\begin{itemize}
\item Estimate a reference emission line spectrum by averaging the spectra of a few pixels at the center of the galaxy.
\item Create a (highly) coherent family of shifted versions of this reference target signature to build a dictionary.
\item Test each pixel of the defined neighborhood using the method developed in section \ref{sec:method}.
\end{itemize}

\subsubsection{Dictionary}
\label{sec:dico}
One of the main assumptions here is that the target signature variability can mostly be modeled as a spectral shift. 
Thus, the dictionary is built here by creating shifted variants of one target signature, $\bd_*$. Assuming that
$\bd_*$ comes from sampling of a continuous model $f(\cdot)$, we can define as $\bd_*^{\delta}$, the shifted
vector that is obtained by sampling $f(\cdot-\delta)$. The {\em linearly spaced shifts} (LSS) dictionary model on an interval $[-\tau,\tau]$
is then defined for a given size $m$, as the dictionary $\bD^m$ composed of the atoms 
$\bd_k= \bd_*^{\tau_k}$, where $\tau_k=-\tau + \frac{2 \tau }{m-1} k$, for $k=0,\ldots,m-1$. 

The key question is then the choice of the number $m$ of shifted versions, or in other words, the redundancy of 
the dictionary. To allow a study of this parameter, we place ourselves in a simplified context:
\begin{itemize}
\item the noise is supposed to be i.i.d.  $\mathcal{N}(0,1)$;
\item the similarity measure is a spectral matched filter between a dictionary atom and the tested spectrum, as in \eqref{eq:MF};
\item the reference spectrum $\bd_*$ is a non-negative vector with unit length, where its autocorrelation function $\Gamma(u)= \langle \bd_*, \bd_*^{u}\rangle$ 
is non-increasing in $|u|$, and has compact support such that $||\bd_*^{u}||=||\bd_*||=1$, for $u\in [-\tau,\tau]$;
\item the target signature $\bx$ is built from a translation $\bd^u_0$ of the reference spectrum $\bx = a\bd^u_0$,  where $a>0$, and $u$ is a random shift that is uniformly distributed on $[-\tau,\tau]$.
\end{itemize}
A measure of the redundancy of a given normalized dictionary $\bD$ can be given by its coherence, which is defined as 
$\mu= \max\limits_{i\neq j}{|\langle\bd_i,\bd_j\rangle|}$.
For a LSS dictionary $\bD^m$, and under the aforementioned assumptions, this coherence reduces to
the correlation between two consecutive atoms:
$\mu= \langle\bd_j,\bd_{j+1}\rangle$, for $1\le j < m$. 
As illustrated by figure \ref{fig:atoms}, by design of the dictionary, the larger the dictionary size $m$, the more correlated the atoms are, and the more coherent the dictionary is.

\begin{figure}[hbtp]
\centering
\includegraphics[width=\linewidth]{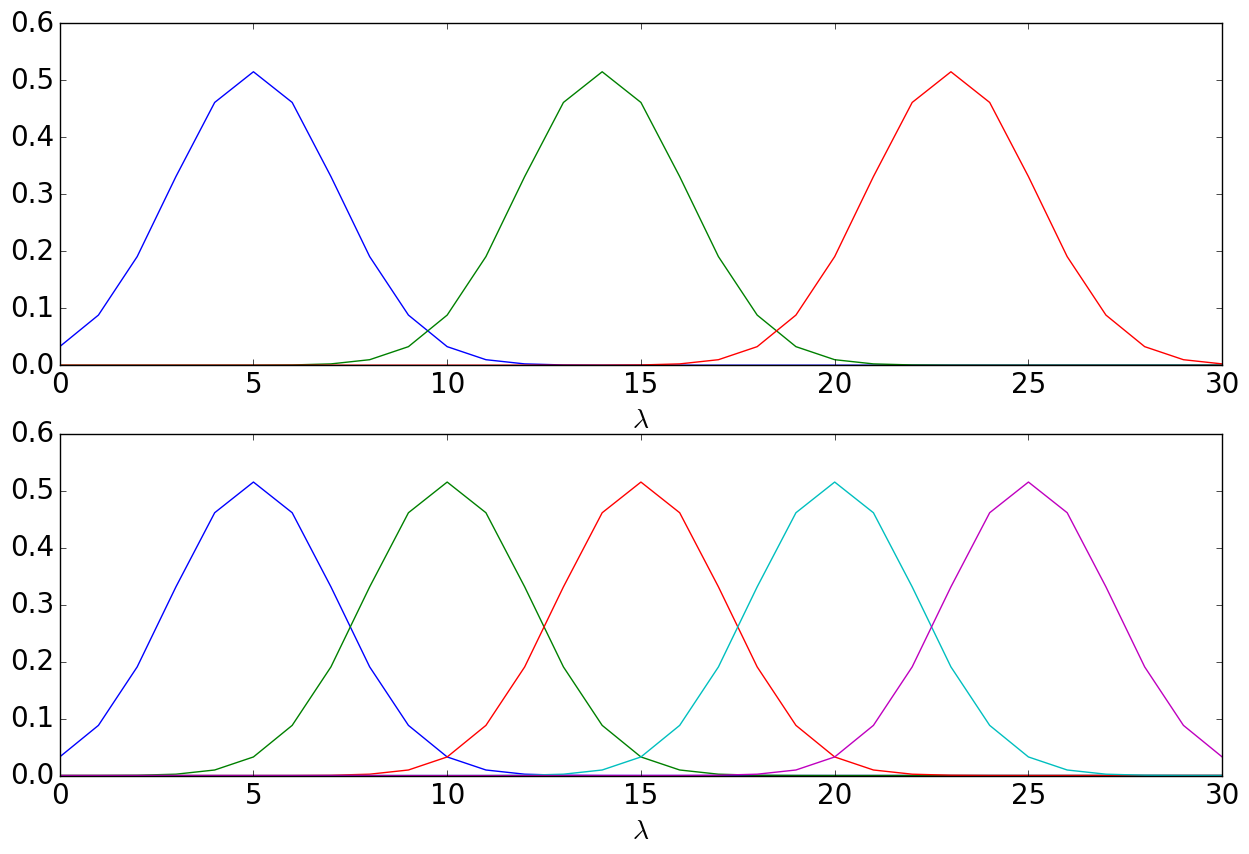}
\caption{Example of dictionaries built from a reference $\bd_*$, with varying number of atoms (with 3 atoms and 0.2 coherence and with 5 atoms and 0.5 coherence). The reference $\bd_* \in \mathbb{R}^l$, with $l=30$,  
is sampled for $j=1,\ldots,l$ from a Gaussian density centered on the median band $j=15$, with full width at half maximum of 5 ($\sigma \approx 2.12$) truncated on $\pm6$ around the mode, and $\ell_2$-normalized. The maximal shift is $\tau=8$.}
\label{fig:atoms}
\end{figure}

Let $\bz^m = (\bD^m)^T \by \in \mathbb{R}^m$ be the vector of the matched filter statistics, the elements of which are defined as $z_j^m$ for $1 \le j \le m$.
For a given decision threshold $\eta$, the PFA for the max-test approach is expressed as 
\begin{align}
 \alpha_m= \Pr{\left(\max{\bz^{m}} > \eta \right)}, \qquad \textrm{under } \mathcal{H}_0.
 \label{eq:defPFA}
\end{align}
Here the noise vector $\beps$ is $\mathcal{N}(0,\bI_m)$ distributed under $\mathcal{H}_0$. If the atoms are orthogonal (e.g., if they have disjoint support),
the vector $\bz^m$ is then normally distributed with zero mean and covariance matrix $\bD^m (\bD^m)^T= \bI_m$. In this case, 
we can compute exactly the PFA as
\begin{align}
\begin{split}
\alpha_m&= 1- \Pr{\left(\max{\bz^{m}} \leq \eta\right)} = 1- \Pr{(z^{m}_1 \le \eta)}^m,\\
&=1- \Phi\left(\eta\right)^m,
\end{split}
\label{eq:exactPFA}
\end{align}
where $\Phi$ is the cumulative function of the normal distribution.
In practice, the dictionary is chosen to be highly coherent (as we want to track close translated versions of a reference spectrum). 
This requires another way to be found to estimate or maximize this probability.
\begin{proposition}
For any $t \in \mathbb{R}$ and $m\ge 2$, let $M_{m+1}(t)$ be defined recursively, under $\mathcal{H}_0$, as
\begin{align}
M_{m+1}(t) = \Pr{\left( z^{m+1}_1 \le t \mid z^{m+1}_2 \le t, z^{m+1}_3 \le t \right)} \times M_{m}(t),
\label{eq:boundPFA}
\end{align}
with $M_2(t)= \Pr{(z^{2}_1 \le t,z^{2}_2 \le t)}$.
Under the aforementioned assumptions, an upper boundary of the PFA $\alpha_m$ is given by  $1-M_m(\eta)$.
 \label{prop:pfa}
\end{proposition}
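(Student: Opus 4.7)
The plan is to convert the target upper bound on $\alpha_m$ into a lower bound on
\[
P_m(t) := \Pr\bigl(z_1^m \le t,\ldots,z_m^m \le t\bigr),
\]
so that $\alpha_m = 1 - P_m(\eta)$, and to prove $P_m(\eta) \ge M_m(\eta)$ by induction on $m$. First I would collect the relevant structure. Under $\mathcal{H}_0$, $\bz^m = (\bD^m)^T \beps$ is a centered Gaussian vector with pairwise covariances $\mathrm{cov}(z_i^m, z_j^m) = \langle \bd_i,\bd_j\rangle = \Gamma(\tau_i - \tau_j) \ge 0$, the non-negativity coming from $\bd_*\ge\b0$. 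The LSS construction makes this covariance Toeplitz in $|i-j|$, so that the joint law of any consecutive block $(z_{k+1}^{m+1},\ldots,z_{k+m}^{m+1})$ extracted from an $(m{+}1)$-atom dictionary coincides, after reindexing, with the $m$-atom joint law $(z_1^m,\ldots,z_m^m)$. This consistency is what allows the recursion to close across dictionary sizes, and the base case $m=2$ holds by the very definition $M_2(t) = P_2(t)$.

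For the inductive step I would factorise
\[
P_{m+1}(t) = \Pr\bigl(z_1^{m+1} \le t \mid z_2^{m+1} \le t,\ldots,z_{m+1}^{m+1} \le t\bigr) \cdot \Pr\bigl(z_2^{m+1} \le t,\ldots,z_{m+1}^{m+1} \le t\bigr),
\]
identify the second factor with $P_m(t)$ via stationarity, and apply the inductive hypothesis to bound it below by $M_m(t)$. The crux of the argument is then the inequality
\[
\Pr\bigl(z_1^{m+1} \le t \mid z_2^{m+1} \le t,\ldots,z_{m+1}^{m+1} \le t\bigr) \ge \Pr\bigl(z_1^{m+1} \le t \mid z_2^{m+1} \le t, z_3^{m+1} \le t\bigr),
\]
which says that appending further lower-orthant events to the conditioning can only make the decreasing event $\{z_1^{m+1} \le t\}$ more likely. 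This is a positive-association statement for decreasing events of a centered Gaussian vector with non-negative correlations; the natural tool is Pitt's theorem (Pitt, 1982), which asserts precisely that such vectors induce measures under which all decreasing events are positively associated.

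The main obstacle is that this association has to survive restriction to the lower-orthant $\{z_2^{m+1} \le t, z_3^{m+1} \le t\}$ already present in the conditioning; a direct application of Pitt only gives the unconditional version. I would therefore either invoke the ``strong FKG'' form of the Gaussian inequality, or, alternatively, use the monotonicity of $\Gamma$ to exhibit an explicit screening structure between $z_1^{m+1}$ and $(z_4^{m+1},\ldots,z_{m+1}^{m+1})$ given $(z_2^{m+1},z_3^{m+1})$, and deduce the conditional positive association from it. Once this is in place, combining with the inductive hypothesis yields
\[
P_{m+1}(t) \ge \Pr\bigl(z_1^{m+1} \le t \mid z_2^{m+1} \le t, z_3^{m+1} \le t\bigr) \cdot M_m(t) = M_{m+1}(t),
\]
which closes the induction and, specialising to $t=\eta$, gives $\alpha_m = 1 - P_m(\eta) \le 1 - M_m(\eta)$ as announced.
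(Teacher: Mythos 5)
Your overall architecture matches the paper's: factorise the lower-orthant probability $\Pr(\max \bz^{m+1}\le t)$ into a conditional term times the orthant probability of the remaining $m$ coordinates, bound the conditional term below by conditioning on only the two nearest neighbours, and recurse down to $M_2$. However, one step would fail as written: the claim that, by stationarity/Toeplitz consistency, the block $(z^{m+1}_2,\ldots,z^{m+1}_{m+1})$ has the same joint law as $(z^{m}_1,\ldots,z^{m}_{m})$, so that the second factor \emph{equals} $P_m(t)$. This is false for the LSS construction: the $m+1$ atoms of $\bD^{m+1}$ are spaced by $2\tau/m$ on $[-\tau,\tau]$ while the $m$ atoms of $\bD^{m}$ are spaced by $2\tau/(m-1)$, so every lag in an $m$-coordinate sub-block of $\bz^{m+1}$ is strictly smaller, and hence (since $\Gamma$ is non-increasing in the absolute shift) every pairwise correlation is at least as large as the corresponding one in $\bz^{m}$. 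The two Gaussian vectors are not equal in law, and the recursion does not close by equality. The paper closes it by an inequality instead: both vectors being centred with unit marginal variances, Slepian's lemma yields $\Pr\left(z^{m+1}_2\le t,\ldots,z^{m+1}_{m+1}\le t\right) \ge \Pr\left(z^{m}_1\le t,\ldots,z^{m}_{m}\le t\right)$, which is exactly the direction needed to chain with the inductive hypothesis $P_m(t)\ge M_m(t)$. You need to replace your equality by this comparison inequality; without it the induction is broken.

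On the conditional step you are, if anything, more careful than the paper. The paper invokes positive association of $\bz^m=(\bD^m)^T\beps$ (a non-negative linear image of independent Gaussians, hence associated in the sense of Esary, Proschan and Walkup) and directly asserts $\Pr(z^{m+1}_1\le t\mid z^{m+1}_2\le t,\ldots,z^{m+1}_{m+1}\le t)\ge \Pr(z^{m+1}_1\le t\mid z^{m+1}_2\le t, z^{m+1}_3\le t)$; as you correctly point out, plain association only gives the unconditional product bound for decreasing events, and the conditional form needs a stronger FKG/MTP2-type property of the Gaussian orthant measure. Flagging that and proposing Pitt's theorem or a strong-FKG argument is a legitimate refinement of the published argument, but it does not compensate for the missing Slepian comparison, which is the step your write-up gets wrong.
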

\begin{proof}
 See Appendix  \ref{sec:append1}.
\end{proof}
The interest of expression \eqref{eq:boundPFA} is that the first factor of the right hand side and the initial value $M_2(t)$ can be evaluated numerically based on quadrature rules for trivariate and bivariate normal distribution functions \cite{genz2009computation}, without the requirement for any Monte-Carlo approximation. Thus, this upper boundary can be easily and accurately computed. When the atoms are uncorrelated, this boundary
is sharp and reduces to \eqref{eq:exactPFA}. Moreover this allows appreciation of, for a given threshold $\eta$, the increase in 
the PFA $\alpha_m$ as a function of the dictionary size $m$ for highly correlated atoms. In a reciprocal way, this allows the evaluation of the threshold $\eta_m$, 
which ensures a false alarm rate that is lower than a given $\alpha$ for any $m\ge 1$.

We can now estimate roughly the potential detection gain under  $\mathcal{H}_1$ as a function of $m$, as follows:
Under $\mathcal{H}_1$, we have assumed that $$\by = a\bd^u_* + \beps$$  with a shift $u \sim \mathcal{U}([-\tau,\tau])$. 
Then, if we assume that the maximum is obtained for the closest atom, which is by assumption the more correlated with $\bd^u_*$, 
the expected  max-test statistic can be approximated by $$E[\max \bz^m] \approx aE[\Gamma(e_m)]$$ where $\Gamma(\cdot)$ is the autocorrelation function of $\bd_*$ and $e_m \sim \mathcal{U}([0,\tau/(m-1)])$ is the shift between $\bd^u_*$ and the closest atom.

Using this expected max-test value under $\mathcal{H}_1$ and the upper boundary on the false alarm given in proposition \ref{prop:pfa}, we can see in Figure \ref{fig:comparaisonPFA_Power} that  when the dictionary size $m$ increases, the max-test statistic can still increase under $\mathcal{H}_1$. 
However, for fixed level control $\alpha$, the test threshold (upper boundary) $\eta_m$ does not increase significantly above a certain size, e.g., $m \ge 10$ in Figure \ref{fig:comparaisonPFA_Power}. This is clearly explained by the stronger correlations when adding new atoms
Conversely, if the atoms are uncorrelated, we see that the threshold deduced from \eqref{eq:exactPFA} increases faster than the potential gain of the max-test statistic under $\mathcal{H}_1$.

This is confirmed in Figure \ref{fig:ROC}, which shows different empirical ROC curves for different sized $m$ of the LSS dictionaries. It can be seen empirically that the more coherent the dictionary, the more powerful becomes the max test.
\begin{figure}[hbtp]
\centering
\includegraphics[width=0.8\linewidth]{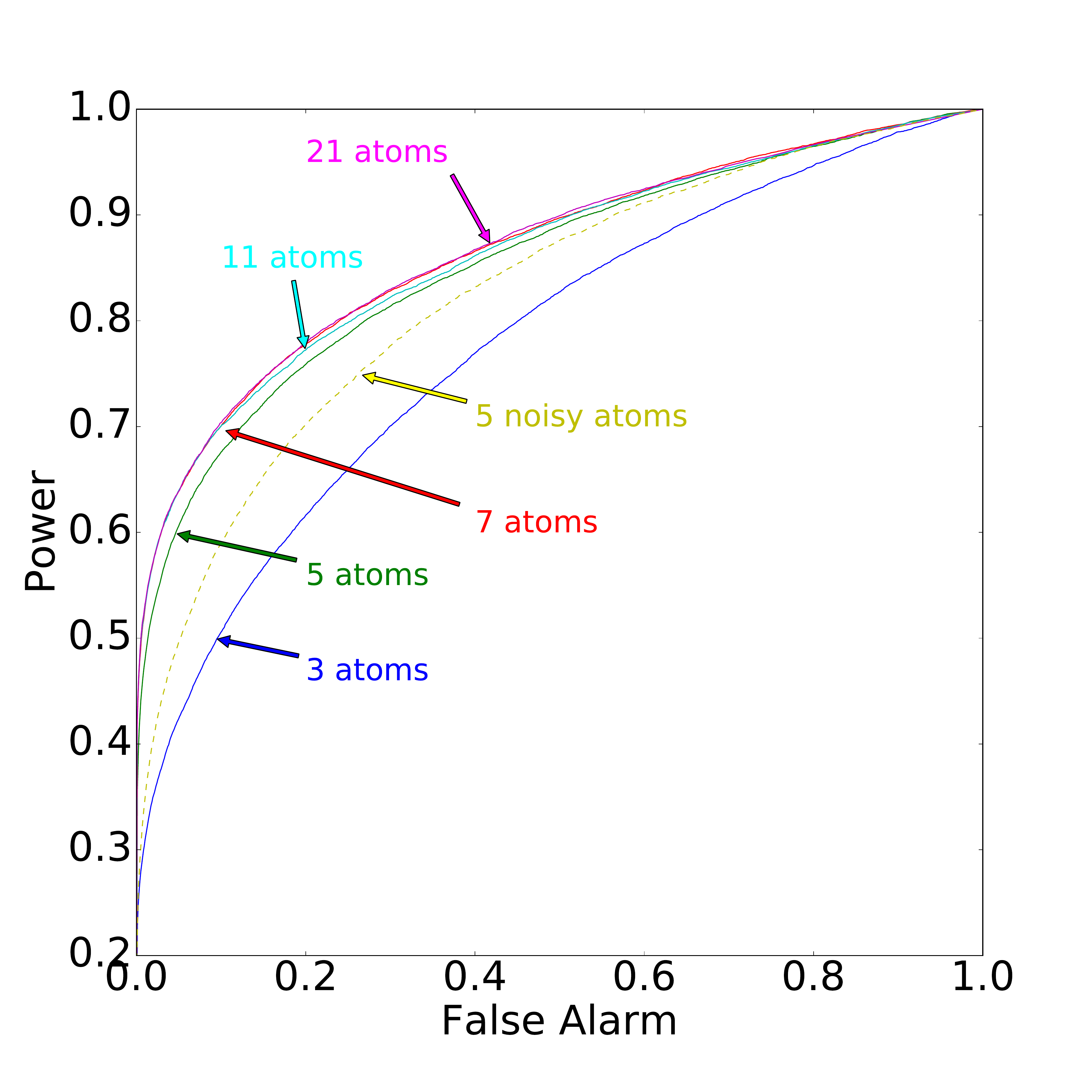}
\caption{Comparison of empirical ROC curves for several dictionary sizes $m$ under the LSS dictionary model.
The results were obtained on $50$ Monte-Carlo runs of simulated MUSE-like data under the aforementioned assumptions.}
\label{fig:ROC}
\end{figure}

As a consequence, in the application, the dictionary will be built to be as coherent as possible for the spectral resolution of the MUSE instrument. The reference atom $\bd_{*}$ is estimated by averaging the spectra of the 5 pixels at the spatial intensity peak of the galaxy. The spectrum is limited to a $l=30$ spectral band area centered on the spectral emission peak, which ensures the presence of the whole emission line feature. Based on astronomical priors, the spectral shift is limited to the interval $[-\tau,\tau]$ with $\tau=7$ MUSE spectral  bands (i.e., $\tau\approx \SI{9}{\angstrom}$). Shifting is done at the spectral resolution of the instrument, to avoid any interpolations. The dictionary $\bD^m$ is finally built with the atoms corresponding to these $m=15$ shifted versions of $\bd_{*}$.

\begin{figure}[hbtp]
\centering
\includegraphics[width=0.8\linewidth]{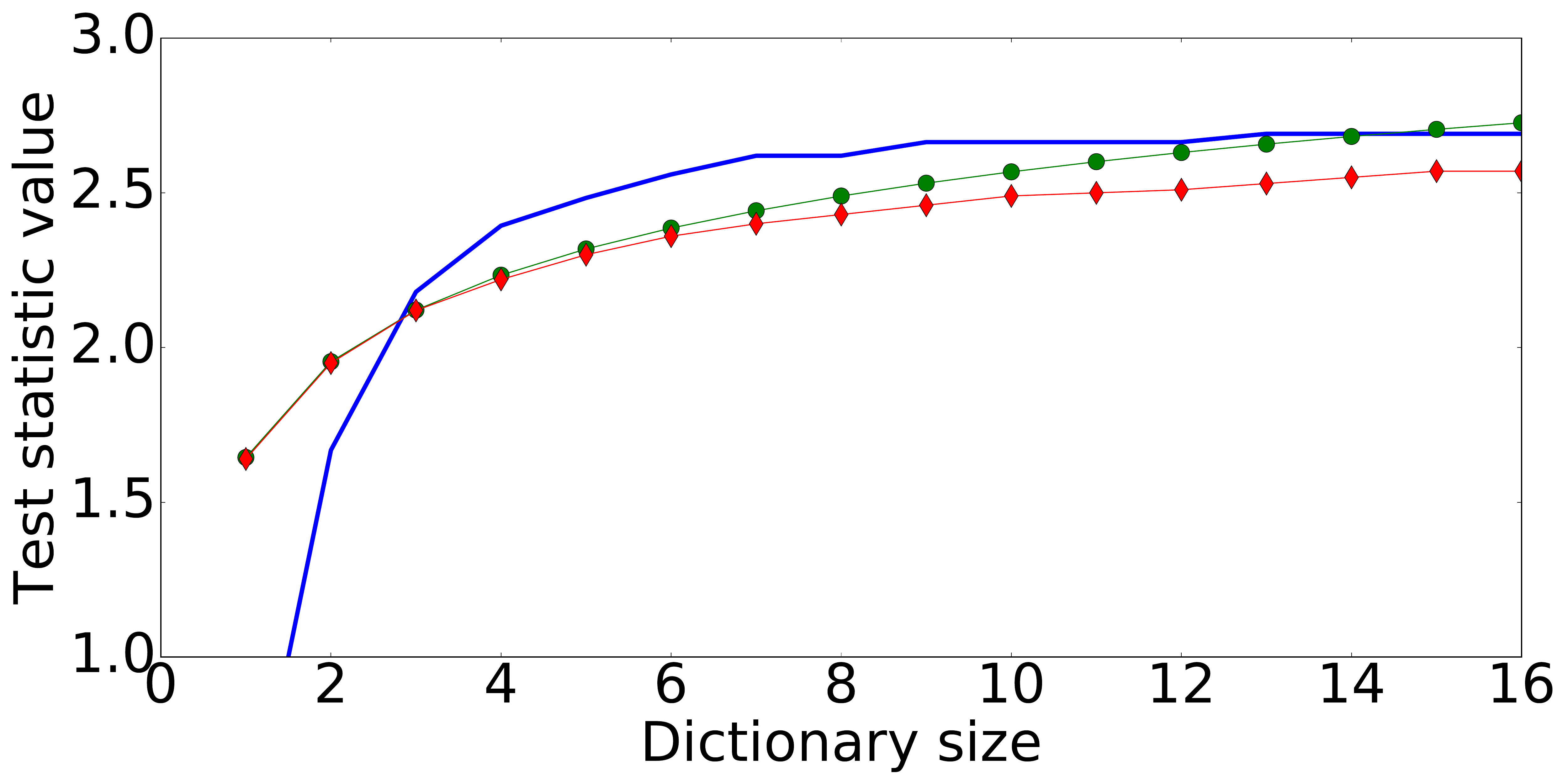}
\caption{Test statistics threshold versus the number of dictionary atoms $m$ for a control level of $\alpha=0.05$.  $\begingroup\color{red}\blacklozenge\endgroup$ marker curve, threshold $\eta_m$ 
for the LSS dictionary $\bD^m$ as in Fig. \ref{fig:atoms};
   $\begingroup\color{OliveGreen}\bullet\endgroup$
marker curve, threshold for a size $m$ dictionary 
with uncorrelated atoms; heavy blue curve, evolution of the test statistic potential gain  under $\mathcal{H}_1$ for an intensity $a=2.7$.
\label{fig:comparaisonPFA_Power}}
\end{figure}

\subsubsection{Similarity measure}
To test whether a given spectrum belongs to the extended source, the similarity measure used in this application is the 
SAD, as defined in \eqref{eq:SAD}. Of note, other metrics were explored to build the test statistic,
such as the matched filter one \eqref{eq:MF} and the spectral information divergence defined in \cite{chang1999spectral}.
Spectral information divergence is built upon the symmetrical Kullback-Leibler divergence, and it compares the spectra as distribution densities. As it demands positive signals for its computation, 
it cannot be used directly for our problem, as the MUSE data can be negative due to high symmetrical noise levels.  
The matched filter approach can be used on the MUSE data, and it gives good results. However SAD appears to be 
more robust to some systematics of the MUSE data cubes, such as the edges where there is higher variability, and it is preferred here.

\subsection{Results on real data}
\label{ssec:results}
The results on several subcubes of $n=50 \times 50$ pixels by $l=30$ wavelengths centered around interesting objects of the HDFS catalog are shown in Figure \ref{fig:result} for the detection procedure described in Alg. \ref{algo:FDR}. For each of the $n=50\times50=2500$ spectra, the max-test statistics \eqref{eq:maxtest} are obtained from the SAD similarity measure, and with a highly coherent dictionary constructed as described in the last paragraph of section \ref{sec:dico}. The empirical null estimators are computed on larger subcubes (centered around the subcube to be tested) that are composed of  200 by 200 spectra. 
For each object, the first row shows the narrow band image around the emission line (the data subcube is totalled along the $l=30$ spectral bands centered on the emission line peak). The second column of the first row shows the same narrow-band image, but after the different preprocessing steps (which include continuum subtraction and FSF convolution). The last column shows the reference spectrum $\bd_*$, built from the pixels at the center of the studied object.
On the second row, the first column shows the maps of the empirical 
$p$-values \eqref{eq:pvalemp} obtained for the $n=2500$ spectra. The maps of the  $q$-values are depicted on the second column.
Q-values were introduced in \cite{storey2003statistical} and can be seen as the FDR counterpart of the $p$-values. 
For each  test statistic, it is defined as the minimal FDR that allows this test statistic to be a discovery. In our detection framework, this is the 
minimal FDR control level $q\in [0,1]$ in Alg. \ref{algo:FDR}, such that a given spectrum is detected as part of the halo.
The interest in this global measure of significance is clear here, as it allows us to present more contrasted significance maps than the classical $p$-value maps. 
The third column shows the binary detection map provided by  Alg. \ref{algo:FDR} for a nominal FDR control level $q=0.2$. The contours of the detection region for different values of $q$ are also superimposed. 
From these maps, it can be seen that several of these objects show clear asymmetry, and that they extend beyond the simple support of a punctual source (the black circles show the support of an estimated FSF). Studies are currently being conducted by astronomers at the Centre de Recherche Astrophysique de Lyon to analyze these results and to apply the method to other sky fields.

\begin{figure}[hbtp]
\centering
\subfloat[Object 43]{
\begin{minipage}{\linewidth}
\centering
\includegraphics[width=0.9\linewidth]{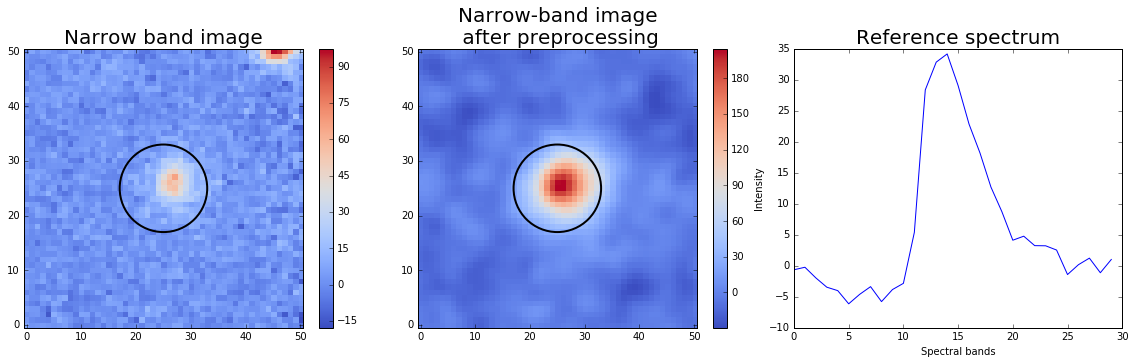}
\includegraphics[width=0.9\linewidth]{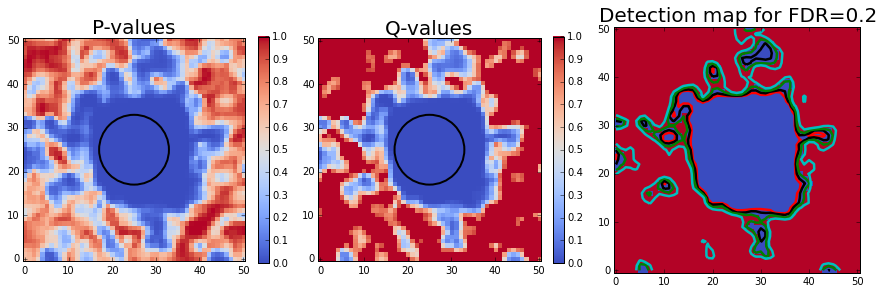}
\end{minipage}%
}\\
\subfloat[Object 92]{
\begin{minipage}{\linewidth}
\centering
\includegraphics[width=0.9\linewidth]{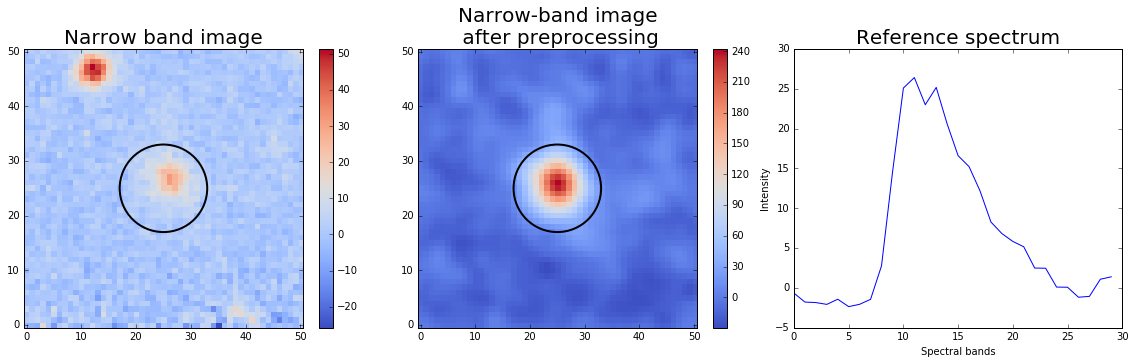}
\includegraphics[width=0.9\linewidth]{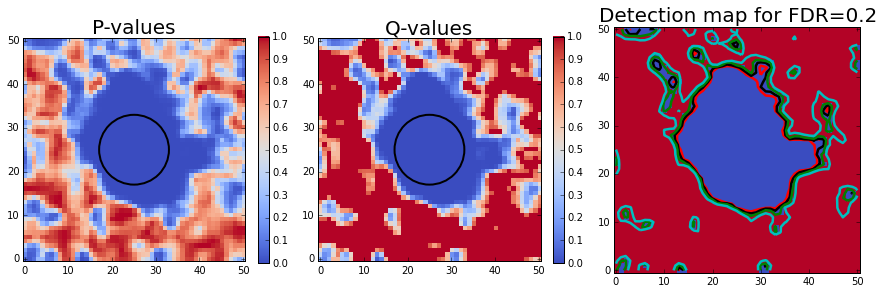}
\end{minipage}%
}\\
\subfloat[Object 139]{
\begin{minipage}{\linewidth}
\centering
\includegraphics[width=0.9\linewidth]{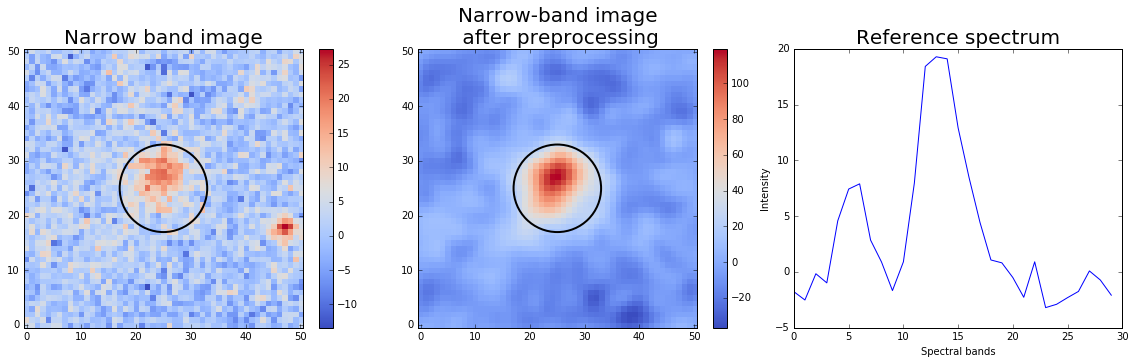}
\includegraphics[width=0.9\linewidth]{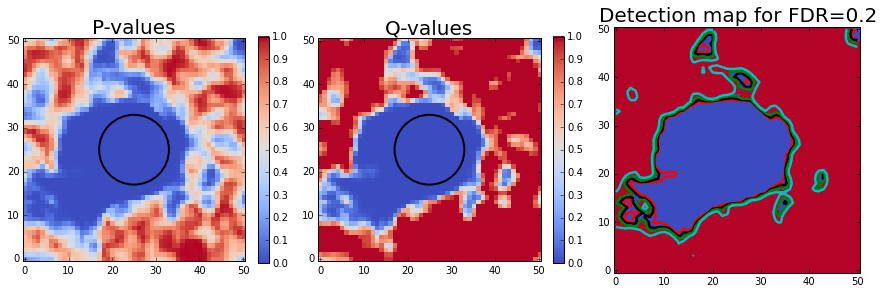}
\end{minipage}%
}
\caption{Results on several objects of the HDFS. Each subfigure corresponds to an object with the numbering to the catalog defined in \protect\cite{bacon2015muse}. For each subfigure: on the first row, from left to right: narrow-band image, narrow-band image after preprocessing (continuum subtraction and FSF convolution), and reference spectrum atom.
On the second row, from left to right: p-value map resulting from the test, q-value map, and detection contours for several FDR levels $q$ (0.05 in red, 0.1 in black, 0.2 in green, and 0.4 in cyan) superimposed on the binary detection map for a FDR level $q=0.2$ (blue pixels, halo detections). The circles in black in the first two columns show the extent of the FSF. Results are analyzed in \ref{ssec:results}.}
\label{fig:result}
\end{figure}

\section{Conclusions}
\label{sec:conclu}
In this paper, a new method is proposed to answer a detection problem of a weak target signature that is partially known, but with a possible large variability within an unknown background that is difficult to model.
To answer to this problem, an unsupervised detector was proposed, based on a maximum test approach, as studied in \cite{arias2011global}. This detector takes explicitly the possible target variability into account by using a highly coherent dictionary. It does not need any knowledge of the background, but a simple noise symmetry assumption, and the non-negativity of the sparse representation of the targeted signal. This allows to estimate the test statistic distribution and to implement a simple detection procedure robust to model/background miss-specification. Moreover, the error control was developed based on a false discovery rate approach, and a global measure of the significance was obtained. Such a control with detection threshold that adapts to the data is not yet widely used in the signal processing community whereas it is highly pertinent for processing massive datasets. 
This whole new process was tested on real MUSE data. The promising obtained results are presently analyzed by astronomers.
Future extensions of this original method will account for the existence of spatial structure of the target while controlling the FDR.
 The MUSE data used in this paper is now publicly available at \url{http://muse-vlt.eu/science/hdfs-v1-0/}, and the Python code of the proposed method is available on demand.

\section*{Acknowledgments}

The authors would like to thank the ERC 339659-MUSICOS for its funding, Roland Bacon and Floriane Leclercq for their expertise on the MUSE data, and Jean-Baptiste Courbot for numerous interesting exchanges on weak target detection.

\appendix
\makeatletter
\def\@seccntformat#1{Appendix~\csname the#1\endcsname:\quad}
\makeatother

\subsection{Proof for the empirical null estimators}

\subsubsection{Proof of lemma \ref{le:med}}
\label{app:le}

Let $\bg=\left(T_{\textrm{max}}(\by_1),\ldots,T_{\textrm{max}}(\by_n)\right)$ be the set composed of the $n$ max statistics, the elements of which are
denoted as $g_i$ for $1\le i \le n$
Similarly
$\bs=\left(-T_{\textrm{min}}(\by_1), \ldots,-T_{\textrm{min}}(\by_n)\right)$ is the set composed of the $n$ opposite min statistics, the 
elements of which are denoted as $s_i$ for $1\le i \le n$.

We first show that $\widehat{\mu}_0$  verifies \eqref{eq:crossmed}, i.e., that  $\#\{g_i \le \widehat{\mu}_0\} = \#\{s_i > \widehat{\mu}_0\}$.
For absolutely continuous distributions, $\Pr( t_{(n)}= t_{(n+1)})=0$.
Thus from \eqref{eq:mu0est}, we get that  $\#\{ t_i \le \widehat{\mu}_0\}= n$ with probability one.
The sample set $\bt$ is the union of $\bg$ and $\bs$: if $m_0= \#\{ g_i \le \widehat{\mu}_0\}$, then  
 $\# \{ s_i \le \widehat{\mu}_0\}=n-m_0$. As a consequence, $\# \{ s_i > \widehat{\mu}_0\}=n-(n-m_0)= m_0= \#\{ g_i \le \widehat{\mu}_0\}$, which shows that 
 $\widehat{\mu}_0$  verifies \eqref{eq:crossmed}.
 
We show now that $\widehat{\mu}_0$ converges in probability toward $\mu_0$. As  $\widehat{\mu}_0$ satisfies \eqref{eq:crossmed}, and 
$\overline{F}(t)$ (resp. $\overline{G}(t)$) converges in probability to $F(t)$ (resp.  $G(t)$) for any $t \in \mathbb{R}$, $\widehat{\mu}_0$
converges in probability to the solution of
\begin{align*}
 F(t)= G(t).
\end{align*}
if this equation admits a unique solution.
Assuming that the median of $F_0$ is uniquely defined, it follows that for $t>\mu_0$ 
\begin{align*}
F(t)=\pi_0 F_0(t) +  \pi_1 F_1(t) \ge \pi_0 F_0(t) > \pi_0 F_0(\mu_0)= \pi_0/2,
\end{align*}
Moreover, for $t>\mu_0$ 
\begin{align*}
G(t)=\pi_0 G_0(t)   < \pi_0 G_0(\mu_0)= \pi_0/2 < F(t),
\end{align*}
where the first equality is due to zero assumption A4. 
As a consequence, there is no solution of $(\mu_0,+\infty)$.
Similarly, according to zero assumption A3, that 
for $t<\mu_0$,  $F(t) < G(t)$. 
Therefore the unique solution is for $t=\mu_0$, where 
$F(\mu_0)=\pi_0 F_0(\mu_0)=\pi_0/2= \pi_0 G_0(\mu_0) =  G(\mu_0)$, 
which concludes the proof. \hfill $\square$

\medskip
\subsubsection{Proof of proposition \ref{prop:nullest}}
\label{app:prop}
We first show that the $\pi_0$ estimator given in \eqref{eq:pi0} is consistent.
From lemma \ref{le:med}, $\widehat{\mu}_0$ converges in probability toward $\mu_0$: 
$\widehat{\mu}_0 \overset{P}{\longrightarrow} \mu_0$.
The triangular inequality ensures that $|\overline{F}(\widehat{\mu}_0)- F(\mu_0)| \le |\overline{F}(\widehat{\mu}_0) - F(\widehat{\mu}_0)| + |F(\widehat{\mu}_0) - F(\mu_0)|$.
The first term of the right-hand side is dominated by $\sup_t |\overline{F}(t) - F(t)|$,, which converges in probability toward $0$, according to assumption A5. 
The second term also converges in probability toward $0$, according to the continuous mapping theorem. Thus 
$\overline{F}(\widehat{\mu}_0) \overset{P}{\longrightarrow} F(\mu_0)$. 

According to \eqref{eq:expH0}, $F(\mu_0)= \pi_0 F_0(\mu_0)= \frac{\pi_0}{2}$.
As $2 \overline{F}\left(\widehat{\mu}_0\right)= 2\frac{n_0}{n}$, this shows that 
\begin{align*}
\widetilde{\pi}_0 \equiv 2\frac{n_0}{n} \overset{P}{\longrightarrow} \pi_0 \in (0,1].
\end{align*}
Thus $\widehat{\pi}_0=\min\left\{\widetilde{\pi}_0, 1 \right\}$ also converges in probability to $\pi_0$.

We show now the consistency of \eqref{eq:F0empest} for $t \in \mathbb{R}$.
From \eqref{eq:expH0} and assumption A5, it follows now that $\overline{F}(t) \overset{P}{\longrightarrow} \pi_0 F_0(t)$ 
for all $t \le \mu_0$. Then, according to the Slutsky theorem, $\overline{F}(t) /\widetilde{\pi}_0 \overset{P}{\longrightarrow}  F_0(t)$. As $\widehat{F}_0(t)=\overline{F}(t) /\widetilde{\pi}_0$ for all $t \le \mu_0$,   this  
shows the consistency for  $t \le \mu_0$.
The demonstration for $t > \mu_0$ can be done in  a similar manner,  by noting that 
$\widehat{F}_0(t)= 1- \overline{G}(t)/ \widetilde{\pi}_0$ for $t > \mu_0$. \hfill $\square$

\medskip
\subsection{Proof of proposition \ref{prop:storey}}
\label{app:storey}
Let $T_{\textrm{max}}(\by_{(1)}) \leq T_{\textrm{max}}(\by_{(2)}) \leq \cdots  \leq T_{\textrm{max}}(\by_{(n)})$ be the ordered max-test statistics, while $p_{(1)} \leq p_{(2)} \leq \cdots  \leq p_{(n)}$ are the ordered $p$-values (while $p_{i}$ notes the p-value associated with pixel $i$).
From \eqref{eq:pvalemp}, it follows that, $$p_{(i)} = 1 - \widehat{F}_0 \left( T_{\textrm{max}}(\by_{(j)}) \right) \quad \textrm{for } 1\le i \le n$$ where $j=n-i+1$. 
For $j \leq n_0$, $T_{\textrm{max}}(\by_{(j)}) \leq \mu_0$ thus $\# \left\{ s_{0,i} \le T_{\textrm{max}}(\by_{(j)}) \right\} = j$ and $\# \left\{ g_{0,i} \le T_{\textrm{max}}(\by_{(j)}) \right\} = 0$.
So, $2n_0\widehat{F}_0 \left( T_{\textrm{max}}(\by_{(j)})\right) =j$ for $j \leq n_0$, that is $2n_0p_{(i)} = 2n_0 - n+i-1$ for $i\geq n-n_0+1$.
For $k \geq n_0$, let $i_k=n-2n_0 +1+k$. Then $i_k \geq n-n_0+1$ so $2n_0 p_{(i_k)}=2n_0 - n+i_k-1=k$. Thus $\#\{ 2n_0 p_{i} >k \}=n-i_k=2n_0 - k -1$.
If $\zeta= {k}/{2n_0}$, then $\#\{ p_{i} >\zeta \}= \#\{ 2n_0 p_{i} >k \}$. Thus $\frac{ 1+ \#\{ p_{i} >\zeta \}}{(1-\zeta) n}= \frac{2n_0-k}{(1- k/2n_0)n}= \frac{2n_0}{n}$, which 
 shows that $\hat{\pi}_0^{*}(\zeta) = \widehat{\pi}_0$. 
 
In the general case where $\zeta \in [\frac{1}{2},1)$, 
then $\zeta$ and $k_{\zeta}/2n_0$, where $k_{\zeta}=\left \lfloor{2n_0 \zeta}\right \rfloor \in \{n_0,\ldots,2n_0-1\}$, are asymptotically equivalent when
$n_0$ grows to infinity.  
Thus, $\hat{\pi}_0^{*}(\zeta)$ and $\hat{\pi}_0^{*}(k_{\zeta}/2n_0)= \widehat{\pi}_0$ are
asymptotically equivalent. This concludes the proof.
\hfill $\square$

\subsection{Proof of proposition \ref{prop:pfa}}
\label{sec:append1}
Under $\mathcal{H}_0$, for a threshold $t$ we have:
\begin{align*}
\Pr{\left(\max\bz^{m+1} \le t\right)} &= \Pr{\left(z^{m+1}_1 \le t,...,z^{m+1}_{m+1} \le t\right)}\\
					&= \Pr{\left(z^{m+1}_1 \le t \mid z^{m+1}_2\le t ,...,z^{m+1}_{m+1} \le t\right)}\\
					& \times \Pr{\left(z^{m+1}_2\le t,..., z^{m+1}_{m+1}\le t\right)}
\end{align*}
As $\bD^m \geq 0$  and $\by \sim \mathcal{N}(0,\bI_m)$ under $\mathcal{H}_0$, $\bz^m$ is positively associated in the sense of \cite{esary1967association}.
Thus,
\begin{align}
\begin{split}
\label{eq:conditional}
\Pr{(z^{m+1}_1 \le t \mid z^{m+1}_2\le t ,...,z^{m+1}_{m+1}\le t)} \geq \\ \Pr{(z^{m+1}_1 \le t \mid z^{m+1}_2\le t,z^{m+1}_3\le t)}
\end{split}
\end{align}
Using the numerical procedures given in \cite{genz2009computation}, we can accurately compute the right-hand side term of \eqref{eq:conditional}.
Note that this term gives a relatively sharp lower boundary, because $z^{m+1}_2$ and $z^{m+1}_3$ are the more correlated variables with $z^{m+1}_1$ among the $z^{m+1}_j$ for $j\ge 2$.

Moreover, by construction, the shifts between the atoms in $\bD^{m+1}$ are smaller than those for the atoms in $\bD^{m}$. With the autocorrelation function  assumed to be non-increasing with the absolute shifts, it follows that the size $m$ Gaussian random vector $\left(z^{m+1}_2 ,...,z^{m+1}_{m+1} \right)$
has larger correlations than the size $m$ Gaussian random vector $\left(z^{m}_1 ,...,z^{m}_{m}\right)$. By assumption, these two vectors are centered with 
unit marginal variances under $\mathcal{H}_0$. Thus the Slepian lemma \cite{slepian1962one} yields that:
\begin{align} \hspace{-2.5mm}
\label{eq:slepian}
\Pr{\left(z^{m+1}_2\le t ,...,z^{m+1}_{m+1} \le t \right)} \geq \Pr{\left(z^{m}_1\le t ,...,z^{m}_{m} \le t\right)}.
\end{align}
By combining \eqref{eq:conditional} and \eqref{eq:slepian}, we can then minimise  $\Pr{(\max\bz^{m} \le t)}$ by a function $M_m(t)$ that is defined recursively as:
$$M_{m+1}(t) = \Pr{\left(z^{m+1}_1 \le t \mid z^{m+1}_2\le t ,z^{m+1}_3\le t \right)} \times M_{m}(t),$$
where $M_2(t)= \Pr{(z^{2}_1 \le t,z^{2}_2 \le t)}$. This gives the upper boundary for the PFA of proposition \ref{prop:pfa}.
Numerical computations emphasize that $M_m(t)$ increases with $t$. Then it is possible to (numerically) inverse $M_m(\eta)$ to get $\eta_m$ for a control level $\alpha$: $\eta_m=M_m^{-1}(1-\alpha)$ verifies $\Pr(\max\bz^{m} > \eta_m) \leq \alpha$.
\hfill $\square$


\bibliographystyle{IEEEtran}
\bibliography{tsp}

\end{document}